\title{SEFE with No Mapping via \\Large Induced Outerplane Graphs in Plane Graphs}
\date{}
\author{Patrizio Angelini$^1$, William Evans$^2$, Fabrizio Frati$^3$, Joachim Gudmundsson$^3$
\institute{
$1$ Dipartimento di Ingegneria, Roma Tre University, Italy\\
\email{angelini@dia.uniroma3.it}\\
$2$  Department of Computer Science, University of British Columbia, Canada\\
\email{will@cs.ubc.ca}\\
$3$ School of Information Technologies, The University of Sydney, Australia\\
\email{brillo@it.usyd.edu.au}\\
\email{joachim.gudmundsson@sydney.edu.au}
}}
\newcommand{\sefenomap}{{\sc SefeNoMap~}}
\renewcommand{\int}{int}
\newcommand{\remove}[1]{}
\renewcommand{\int}{int}
\newcommand{\cl}{cl}
\renewenvironment{proof}
{{\bf Proof:}}{\hspace*{\fill}$\Box$\par\vspace{2mm}}
\newenvironment{proofx}
{{\bf Proof of Theorem~\ref{th:sefe-main}:}}{\hspace*{\fill}$\Box$\par\vspace{2mm}}
\begin{document}
\pagestyle{plain}

\maketitle

\begin{abstract}
We show that every $n$-vertex planar graph admits a simultaneous embedding with
no mapping and with fixed edges with any $(n/2)$-vertex planar graph. In order
to achieve this result, we prove that every $n$-vertex plane graph has an
induced outerplane subgraph containing at least $n/2$ vertices. Also, we show
that every $n$-vertex planar graph and every $n$-vertex planar partial
$3$-tree admit a
simultaneous embedding with no mapping and with fixed edges.
\end{abstract}

\section{Introduction} \label{se:introduction}

{\em Simultaneous embedding} is a flourishing area of research studying
topological and geometric properties of planar drawings of multiple graphs on
the same point set. The seminal paper in the area is the one of Bra{\ss} {\em et
al.}~\cite{bcd-spge-07}, in which two types of simultaneous embedding are
defined, namely {\em with mapping} and {\em with no mapping}. In the former
variant, a bijective mapping between the vertex sets of any two graphs $G_1$ and
$G_2$ to be drawn is part of the problem's input, and the goal is to construct a
planar drawing of $G_1$ and a planar drawing of $G_2$ so that corresponding
vertices are mapped to the same point. In the latter variant, the drawing
algorithm is free to map any vertex of $G_1$ to any vertex of $G_2$ (still the
$n$ vertices of $G_1$ and the $n$ vertices of $G_2$ have to be placed on the
same $n$ points). Simultaneous embeddings have been studied with respect to two
different drawing standards: In {\em geometric simultaneous embedding}, edges are required to be straight-line segments.
In {\em simultaneous embedding with fixed edges} (also known as {\sc Sefe}), edges can be arbitrary Jordan curves, but each edge that belongs to
two graphs $G_1$ and $G_2$ has to be represented by the same Jordan curve in the
drawing of $G_1$ and in the drawing of $G_2$.

Many papers deal with the problem of constructing geometric simultaneous
embeddings and simultaneous embeddings with fixed edges of pairs of planar
graphs in the variant {\em with mapping}. Typical considered problems include:
(i) determining notable classes of planar graphs that always or not always admit
a simultaneous embedding; (ii) designing algorithms for constructing
simultaneous embeddings within small area and with few bends on the edges; (iii)
determining the time complexity of testing the existence of a simultaneous
embedding for a given set of graphs. We refer the reader to the recent survey by
Blasi\"us, Kobourov, and Rutter~\cite{bkr-sepg-13}.

In contrast to the large number of papers dealing with simultaneous embedding
{\em with mapping}, little progress has been made on the {\em no mapping}
version of the problem.
%Concerning the {\em no mapping} scenario,
Bra{\ss} {\em et
al.}~\cite{bcd-spge-07} showed that any planar graph admits a geometric
simultaneous embedding with no mapping with any number of outerplanar graphs.
They left open the following attractive question: Do every two $n$-vertex planar
graphs admit a geometric simultaneous embedding with no mapping?

In this paper we initiate the study of simultaneous
embeddings with fixed edges and no mapping, called
\sefenomap for brevity. In this setting, the natural counterpart of the Bra{\ss}
{\em et al.}~\cite{bcd-spge-07} question reads as follows: Do every two
$n$-vertex planar graphs admit a \sefenomap?

Since answering this question seems to be an elusive goal, we tackle
the following generalization of the problem: What is the largest $k\leq n$ such
that every $n$-vertex planar graph and every $k$-vertex planar graph admit a
\sefenomap? That is: What is the largest $k\leq n$ such that every
$n$-vertex planar graph $G_1$ and every $k$-vertex planar graph $G_2$ admit two
planar drawings $\Gamma_1$ and $\Gamma_2$ with their vertex sets mapped to point
sets $P_1$ and $P_2$, respectively, so that $P_2 \subseteq P_1$ and so that if
edges $e_1$ of $G_1$ and $e_2$ of $G_2$ have their end-vertices mapped to the
same two points $p_a$ and $p_b$, then $e_1$ and $e_2$ are represented by the
same Jordan curve in $\Gamma_1$ and in $\Gamma_2$? We prove that $k\geq n/2$:

\begin{theorem} \label{th:sefe-main}
Every $n$-vertex planar graph and every $(n/2)$-vertex planar graph have a
\sefenomap.
\end{theorem}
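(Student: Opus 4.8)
The plan is to use the induced-outerplane-subgraph lemma on $G_1$ and then to exploit that, in a \sefenomap, the two drawings only have to agree on their shared edges, may cross each other otherwise, and may be taken to be topological.

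First I would fix a planar embedding of $G_1$; by the lemma it contains an induced outerplane subgraph $H$ with $|V(H)|\ge n/2$, and since deleting a vertex keeps a subgraph induced and outerplane I may assume $|V(H)|=|V(G_2)|$. As $H$ is outerplane, the outerplane embedding of $H$ inherited from $G_1$ has a face $\phi$ incident to every vertex of $V(H)$; parts of $G_1$ may lie inside $\phi$, but this is irrelevant since $\Gamma_2$ is allowed to cross $G_1$. Then I would pick a bijection $\sigma\colon V(G_2)\to V(H)$ and place the vertices of $G_2$ at the points of $V(H)$ accordingly. Because $H$ is an \emph{induced} subgraph of $G_1$, the edges on which the two drawings are forced to coincide are exactly the edges of $G_2$ that $\sigma$ identifies with edges of $H$; in $\Gamma_1$ they are drawn as a subgraph of the outerplane drawing of $H$. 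Hence the construction reduces to a single question: \emph{does this outerplane drawing of the shared subgraph of $G_2$ extend to a planar drawing of $G_2$?} If it does, I complete $\Gamma_2$ by adding the remaining edges of $G_2$ one at a time, each inside the face of the current drawing that contains it — routing it through $\phi$ or through the faces of $H$ and letting it cross $G_1$ as needed — and the pair $(\Gamma_1,\Gamma_2)$ is then a \sefenomap: the vertex set of $\Gamma_2$ is contained in that of $\Gamma_1$, each of the two drawings is planar, and they agree on their shared edges by construction.

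The heart of the argument, and the step I expect to be the main obstacle, is exactly this extension. An outerplane drawing of a subgraph of a planar graph need not extend to a planar drawing of the whole graph, so $\sigma$, the outerplane embedding of $H$, and — within the limits imposed by the $3$-connected components of $G_1$ — the embedding of $G_1$ itself must all be chosen with care. I would exploit the large amount of freedom in an outerplane embedding (the cyclic order in which $V(H)$ appears around $\phi$, and the rotation this forces at each vertex of the shared subgraph) together with the freedom in $\sigma$. Concretely, I would first choose $\sigma$ so that the shared subgraph $\sigma(E(G_2))\cap E(H)$ is as simple as possible, ideally a forest, by a packing-type argument that keeps this intersection acyclic; and then choose the outerplane embedding of $H$ so that the rotation it induces on that forest coincides with the rotation of some planar embedding of $G_2$ — always achievable for a forest — after which a planar drawing of a forest in this configuration extends to $G_2$ and the proof is complete. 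Should forcing a forest turn out to be impossible for some pair $G_1,G_2$, the fallback is to decompose the shared subgraph using the block-cut and SPQR trees of $G_2$ and to argue the extension block by block.
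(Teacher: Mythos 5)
Your first paragraph correctly reduces the problem to the right question, and you have honestly identified the crux: whether the forced partial drawing of $G_2$ (the shared subgraph, drawn as a sub-drawing of the outerplane drawing of $H$ inside $\Gamma_1$) extends to a planar drawing of $G_2$. But you have not closed that gap, and the devices you sketch for closing it are each problematic. The claim that $\sigma$ can be chosen so that the shared subgraph is a forest is asserted, not proved. The claim that you may then re-choose the outerplane embedding of $H$ at will conflicts with how $H$ is obtained: Theorem~\ref{th:outerplane-main} produces a set $V'$ whose induced subgraph is outerplane \emph{in the inherited embedding of the fixed plane graph $G_1$}; if you change the embedding of $G_1$ (or of $H$ within it), $V'$ need no longer induce an outerplane graph, so the ``large amount of freedom'' you invoke is largely illusory. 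Finally, even for a forest, extending a prescribed plane drawing to a planar drawing of a supergraph is not ``always achievable'': partial embedding extensibility for forests constrains not only the rotation at each vertex but also the relative (nested versus side-by-side) placement of the distinct trees, and an outerplane drawing of $H$ forces a particular such arrangement that need not be compatible with any embedding of $G_2$. So the proposal, as written, leaves the main step open.

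The paper takes a different and much shorter route that dissolves the extension problem on the $G_2$ side entirely. It constructs $\Gamma_2$ \emph{first}, as an arbitrary straight-line planar drawing of $G_2$ with no three vertices collinear, yielding a point set $P_2$ in general position. It then invokes the result of Gritzmann et al.\ (and Bose) that an outerplanar graph admits a straight-line planar drawing on \emph{any} general-position point set realizing \emph{any} prescribed outerplane embedding, to draw $G_1[V']$ straight-line on $P_2$ with the embedding it inherits from $G_1$; because that drawing preserves the embedding of the plane subgraph $G_1[V']$ in $G_1$, it extends to a planar drawing $\Gamma_1$ of all of $G_1$ --- i.e., the only nontrivial extension is performed on the $G_1$ side, where Theorem~\ref{th:outerplane-main} gives you control. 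The mapping $\sigma$ never needs to be chosen carefully: any edge present in both graphs joins two points of $P_2$ and is drawn as a straight-line segment in both drawings, hence as the \emph{same} Jordan curve automatically. If you want to repair your proof, this reversal of which drawing is fixed first, together with the general-position/straight-line coincidence trick, is the missing idea.
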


Observe that the previous theorem would be easily proved if $n/2$ were replaced
with $n/4$: First, consider an $(n/4)$-vertex independent set $I$ of any
$n$-vertex planar graph $G_1$ (which always exists, as a consequence of the four
color theorem~\cite{ah-epfci-77,ahk-epfcii-77}). Then, construct any planar
drawing $\Gamma_1$ of $G_1$, and let $P(I)$ be the point set on which the
vertices of $I$ are mapped in $\Gamma_1$. Finally, construct a planar drawing
$\Gamma_2$ of any $(n/4)$-vertex planar graph $G_2$ on point set $P(I)$ (e.g.
using Kaufmann and Wiese's technique~\cite{kw-evpfbspg-02}). Since $I$ is an
independent set, any bijective mapping between the vertex set of $G_2$ and $I$
ensures that $G_1$ and $G_2$ share no edges. Thus, $\Gamma_1$ and $\Gamma_2$ are
a \sefenomap of $G_1$ and $G_2$.

In order to get the $n/2$ bound, we study the problem of finding a large
induced outerplane graph in a plane graph.
A {\em plane graph} is a
planar graph together with a {\em plane embedding}, that is, an equivalence class
of planar drawings, where two planar drawings $\Gamma_1$ and $\Gamma_2$ are
equivalent if: (1) each vertex has the same {\em rotation scheme} in $\Gamma_1$
and in $\Gamma_2$, i.e., the same clockwise order of the edges incident to it;
(2) each face has the same {\em facial cycles} in $\Gamma_1$ and in $\Gamma_2$,
i.e., it is delimited by the same set of cycles; and (3) $\Gamma_1$ and
$\Gamma_2$ have the same {\em outer face}. An {\em outerplane graph} is a graph
together with an {\em outerplane embedding}, that is a plane embedding where all
the vertices are incident to the outer face. An {\em outerplanar graph} is a
graph that admits an outerplane embedding; a plane embedding of an outerplanar
graph is not necessarily outerplane. Consider a plane graph $G$ and a subset
$V'$ of its vertex set. The {\em induced plane graph} $G[V']$ is the subgraph of
$G$ induced by $V'$ together with the plane embedding {\em inherited} from $G$,
i.e., the embedding obtained from the plane embedding of $G$ by removing all the
vertices and edges not in $G[V']$.
We show the following result:

\begin{theorem} \label{th:outerplane-main}
Every $n$-vertex plane graph $G(V,E)$ has a vertex set $V'\subseteq V$ with
$|V'|\geq n/2$ such that $G[V']$ is an outerplane graph.
\end{theorem}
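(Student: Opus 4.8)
The plan is to argue by induction on $n$. First I would reduce to the case that $G$ is $2$-connected: a bridge or a cut vertex of $G$ can never lie on a cycle that separates two vertices, so the block-cut tree lets one paste together solutions for the individual blocks; in each block drawn inside a face of another block one only has to pick a solution that keeps at least one vertex on the boundary of that face, which the inductive hypothesis can be strengthened to provide. So assume $G$ is $2$-connected with outer cycle $C=c_1c_2\cdots c_p$, and let $U=V\setminus V(C)$ be the vertices drawn strictly inside $C$; if $U=\emptyset$ then $G$ is already outerplane and we are done.

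It is cleaner to think of the goal as follows: choosing $V'$ amounts to choosing which vertices to delete so that, after the deletion, the outer face ``floods'' inward, merging with every face it can reach through deleted vertices, until it is incident to all the surviving vertices; equivalently, $G[V']$ is outerplane iff no cycle of $G[V']$ encloses a vertex of $V'$. The first idea --- peel $G$ into its onion layers $L_1,L_2,\dots$ (outer cycle, then outer cycle of what remains, and so on) and keep every second layer --- fails: on the nested-triangles graph the retained layers are pairwise non-adjacent triangles that are still nested in the inherited embedding, so $G[V']$ is not outerplane. The lesson is that one must never keep an entire separating cycle.

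The fix is, when peeling the boundary cycle $C$, to retain only alternate vertices, deleting $D_C=\{c_2,c_4,\dots\}$ with $|D_C|\le p/2$. This breaks $C$, so the outer face of $G-D_C$ floods through the gaps, the retained vertices $c_1,c_3,\dots$ stay on it, and the vertices of the inner layers that it now reaches become incident to it as well; on the nested-triangles graph this is exactly the ``two vertices per triangle'' choice, of size $2n/3\ge n/2$. One then recurses on the plane graphs induced inside each region the flood does not reach, bounded by the facial cycles of $G[V(C)\setminus D_C]$ together with the retained chords. The main obstacle is the counting: a crude ``delete half a layer, recurse'' compounds and yields only about $n/3$, so to reach $n/2$ one must amortize each deleted boundary vertex against the retained vertices it was protecting, recurse only into the regions genuinely cut off by the flood, and choose between the two options ``keep all of $V(C)$ and stop'' (worth $p$) and ``delete $D_C$ and recurse'' according to whether $p\ge n/2$. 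A secondary subtlety is to verify that the inherited embedding of the surviving set is outerplane: after removing $D_C$ one must ensure that no surviving inner vertex is re-enclosed by a short cycle built from retained chords of $G[V(C)]$, which may force deleting a few more boundary vertices; since a maximal outerplane graph has an induced forest on at least half of its vertices, this extra deletion stays within budget and the global count closes at $n/2$.
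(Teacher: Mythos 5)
Your high-level instinct matches the paper's: peel the graph layer by layer, never retain an entire separating cycle, and amortize the deletions. But the two steps you flag as needing care are in fact the entire content of the proof, and the specific fixes you sketch do not work. First, the counting. Your fallback arithmetic ``keep $\geq p/2$ of the outer cycle plus, by induction, $\geq(n-p)/2$ of the inside'' only closes if the set you keep inside is exactly an inductive solution for $G-V(C)$ \emph{and} you never have to delete more than $\lfloor p/2\rfloor$ boundary vertices. Both fail together: a cycle using retained boundary vertices \emph{and} retained inner vertices (e.g.\ $c_1,c_3$ and an inner vertex adjacent to both) can enclose other retained inner vertices even when each piece is outerplane on its own, and this already happens on nested triangles, where ``delete every other vertex of each layer'' does not yield an outerplane graph unless the deleted vertices are coordinated across layers. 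Breaking such mixed cycles forces the choice of which boundary vertices to delete to depend on the structure and coloring of the component inside the face --- this is what the paper's cycle-breaker algorithm does with the ``rightmost neighbor'' $r(u,b)$ of each retained inner boundary vertex --- and it costs roughly one boundary deletion per retained inner boundary vertex, not per boundary vertex. Paying for these deletions is exactly what the surplus invariant of Lemma~\ref{th:inductive} (conditions (2a)--(2c), stated per node of a contracted block-cutvertex tree) is engineered to do: each face must be handed down a strict surplus, or a surplus-plus-white-edge, from the level below. Your proposed charge (``amortize each deleted boundary vertex against the retained vertices it was protecting'') and your appeal to induced forests in maximal outerplane graphs do not substitute for this; an extra deletion on the boundary is not paid for by anything in your accounting, and a forest condition on $G[V(C)]$ says nothing about the mixed cycles above.

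Second, the reduction to the $2$-connected case is not the routine pasting you describe. If a block $B_2$ hangs at a cutvertex inside an internal face $f$ of a block $B_1$, and the solution for $B_1$ retains the whole cycle bounding $f$, then every retained vertex of $B_2$ is enclosed; strengthening the hypothesis to ``keep one vertex of $B_2$ on the boundary of $f$'' does not help, since the problem is the surviving cycle of $B_1$, not the attachment. The paper resolves this by making the inductive statement itself about the block structure (the contracted block-cutvertex tree, with the surplus condition holding per ${\cal BU}$-node), so that the enclosing face always receives enough credit to blacken a vertex of its own boundary. In short, the proposal identifies the right obstacles but leaves both of them --- the face-by-face cycle-breaking rule and the surplus bookkeeping that makes it affordable --- as assertions rather than arguments, and the concrete shortcuts offered in their place fail on small examples.
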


Theorem~\ref{th:outerplane-main} and the results of Gritzmann {\em et al.}~\cite{gmpp-eptvs-91} yield a proof of Theorem~\ref{th:sefe-main}, as follows:

\begin{figure}[tb]
\begin{center}
\begin{tabular}{c c c}
\mbox{\includegraphics[scale=0.47]{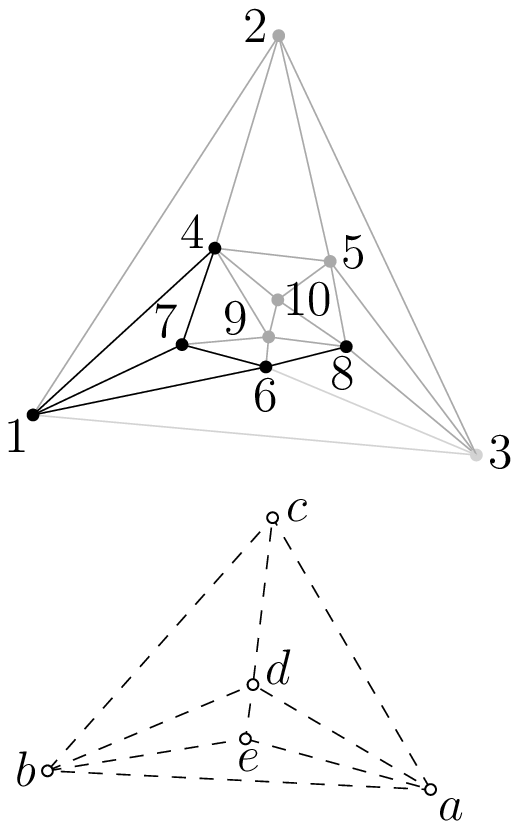}} \hspace{2mm} &
\mbox{\includegraphics[scale=0.47]{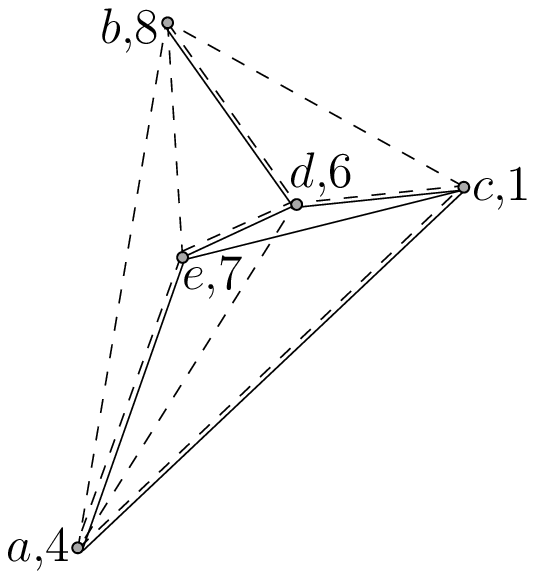}} \hspace{2mm} &
\mbox{\includegraphics[scale=0.47]{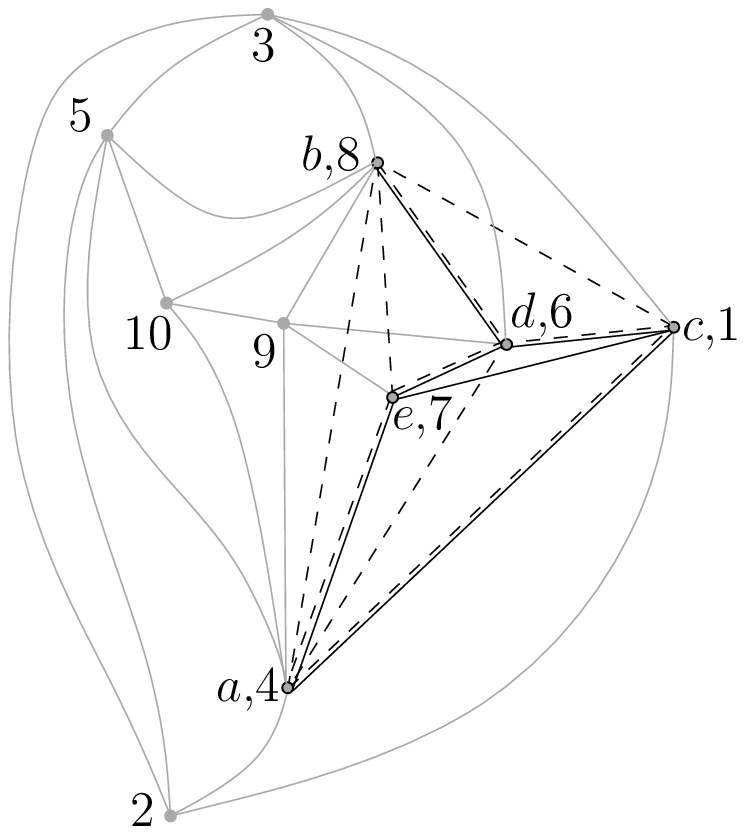}}\\
(a) \hspace{2mm} & (b) \hspace{2mm} & (c)
\end{tabular}
\caption{(a) A $10$-vertex planar graph $G_1$ (solid lines) and a $5$-vertex planar graph $G_2$ (dashed lines). A $5$-vertex induced outerplane graph $G_1[V']$ in $G_1$ is colored black. Vertices and edges of $G_1$ not in $G_1[V']$ are colored gray. (b) A straight-line planar drawing $\Gamma(G_2)$ of $G_2$ with no three collinear vertices, together with a straight-line planar drawing of $G_1[V']$ on the point set $P_2$ defined by the vertices of $G_2$ in $\Gamma(G_2)$. (c) A \sefenomap of $G_1$ and $G_2$.}
\label{fig:illustration}
\end{center}
\end{figure}

\begin{proofx}
Consider any $n$-vertex plane graph $G_1$ and any $(n/2)$-vertex plane graph $G_2$ (see Fig.~\ref{fig:illustration}(a)). Let $\Gamma(G_2)$ be any straight-line planar drawing of $G_2$ in which no three vertices are collinear. Denote by $P_2$ the set of $n/2$ points to which the vertices of $G_2$ are mapped in $\Gamma(G_2)$. Consider any vertex subset $V'\subseteq V(G_1)$ such that $G_1[V']$ is an outerplane graph. Such a set exists by Theorem~\ref{th:outerplane-main}. Construct a straight-line planar drawing $\Gamma(G_1[V'])$ of $G_1[V']$ in which its vertices are mapped to $P_2$ so that the resulting drawing has the same (outerplane) embedding as $G_1[V']$. Such a drawing exists by results of Gritzmann {\em et al.}~\cite{gmpp-eptvs-91}; also it can found efficiently by results of Bose~\cite{b-eogps-02} (see Fig.~\ref{fig:illustration}(b)). Construct any planar drawing $\Gamma(G_1)$ of $G_1$ in which the drawing of $G_1[V']$ is $\Gamma(G_1[V'])$. Such a drawing exists, given that $\Gamma(G_1[V'])$ is a planar drawing of a plane subgraph $G_1[V']$ of $G_1$ preserving the embedding of $G_1[V']$ in $G_1$ (see Fig.~\ref{fig:illustration}(c)). Both $\Gamma(G_1)$ and $\Gamma(G_2)$ are planar, by construction. Also, the only edges that are possibly shared by $G_1$ and $G_2$ are those between two vertices that are mapped to $P_2$. However, such edges are drawn as straight-line segments both in $\Gamma(G_1)$ and in $\Gamma(G_2)$. Thus, $\Gamma(G_1)$ and $\Gamma(G_2)$ are a \sefenomap of $G_1$ and $G_2$.
\end{proofx}

By the standard observation that the vertices in the odd (or even) levels of a breadth-first search tree of a planar graph induce an \emph{outerplanar} graph,
we know that $G$ has an induced outerplanar graph with at least $n/2$ vertices. However, since its embedding in $G$ may not be outerplane, this seems insufficient to prove the existence of a \sefenomap of every $n$-vertex and every $(n/2)$-vertex planar graph.

Theorem~\ref{th:outerplane-main} might be of independent interest, as it is
related to (in fact it is a weaker version of) one of the most famous and
long-standing graph theory conjectures:

\begin{conjecture} \label{conj:induced-forests} {\em (Albertson and Berman
1979~\cite{ab-cpg-79})}
Every $n$-vertex planar graph $G(V,E)$ has a vertex set $V'\subseteq V$ with
$|V'|\geq n/2$ such that $G[V']$ is a forest.
\end{conjecture}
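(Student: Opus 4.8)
The plan is to attack this via the standard extremal-planar machinery --- a minimal counterexample refined by discharging --- using a coloring bound as the baseline to beat. Note first that Theorem~\ref{th:outerplane-main} cannot be invoked directly: an induced outerplane graph may be full of cycles (a triangle is already outerplane), so an induced \emph{forest} does not come for free. For a genuine baseline I would invoke Borodin's theorem that every planar graph is acyclically $5$-colorable; since in an acyclic coloring the union of any two color classes induces a forest, the two largest of the five classes induce a forest on at least $2n/5$ vertices. The entire content of the conjecture is to improve the constant from $2/5$ to the best-possible $1/2$, the value $1/2$ being forced by $K_4$ and by the octahedron, so any successful argument must be essentially tight and cannot afford to waste a constant fraction of vertices.

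Second, I would fix a minimal counterexample $G$: a planar graph on the fewest vertices whose every induced forest has fewer than $n/2$ vertices (equivalently, whose every feedback vertex set exceeds $n/2$). Standard low-degree reductions let me assume $G$ is connected with minimum degree at least $3$: a vertex of degree at most one re-inserts into any induced forest of $G$ minus that vertex, and a degree-two vertex is removed by suppression/contraction to a smaller planar graph to which minimality applies. I would then compile a list of \emph{reducible configurations} --- bounded local subgraphs whose presence lets me delete $O(1)$ vertices, find a large induced forest in the smaller graph by minimality, and re-insert the deleted vertices so that the yield stays at an average of half a vertex each. Natural candidates, as in discharging proofs of acyclic colorability, are adjacent low-degree vertices, separating triangles, and small separators.

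Third, I would give each vertex $v$ the charge $\deg(v)-6$ and each face $f$ the charge $2\ell(f)-6$, chosen so that Euler's formula forces the total charge to equal $-12$, and then design discharging rules redistributing charge toward the reducible configurations, aiming to prove that in the absence of every reducible configuration all final charges are nonnegative --- a contradiction. The \textbf{main obstacle}, and precisely why the conjecture has stood open since 1979, lies in this last step: the gap between the coloring bound $2n/5$ and the target $n/2$ is too wide to be closed by any known local reducible configuration, since a purely local forest-extension loses a positive fraction of vertices across each densely triangulated region. I expect that bridging this gap demands either a genuinely global accounting --- tracking how forest-deletions in many faces interact simultaneously --- or a new structural decomposition of planar triangulations, and it is here that I anticipate the proof to stall, matching the conjecture's long-standing status.
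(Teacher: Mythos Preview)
There is nothing to compare here: the paper does not prove this statement. Conjecture~\ref{conj:induced-forests} is presented in the paper exactly as what it is --- an open conjecture of Albertson and Berman from 1979 --- and the surrounding text only records the best known partial result (the $2n/5$ bound via Borodin's acyclic $5$-colorability) and explains why the conjecture motivates the study of induced outerplane subgraphs. The paper's own contribution, Theorem~\ref{th:outerplane-main}, is explicitly framed as a \emph{weaker} statement than the conjecture, not a proof of it.

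Your write-up reflects this reality accurately. You correctly note that an induced outerplane subgraph need not be a forest, you correctly invoke the acyclic $5$-coloring baseline, and you correctly anticipate that a discharging attack on a minimal counterexample stalls at exactly the gap between $2/5$ and $1/2$. That diagnosis matches the state of the art as the paper describes it. The only thing to flag is a matter of framing: what you have written is not a proof proposal that happens to fail, but an explanation of why no proof is currently known. If the assignment was to supply a proof, the honest answer is that none exists in the paper (or, to public knowledge, anywhere), and your text should say so up front rather than dressing the discussion as a plan that is expected to break down.
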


Conjecture~\ref{conj:induced-forests} would prove the existence of an
$(n/4)$-vertex independent set in a planar graph without using the four color
theorem~\cite{ah-epfci-77,ahk-epfcii-77}. The best known partial result related
to Conjecture~\ref{conj:induced-forests} is that every planar graph has a vertex
subset with $2/5$ of its vertices inducing a forest, which is a consequence of
the {\em acyclic $5$-colorability} of planar graphs~\cite{Borodin79}. Variants
of the conjecture have also been studied such that the planar graph in which the
induced forest has to be found is bipartite~\cite{aw-mifpg-87}, or is
outerplanar~\cite{h-iftog-90}, or such that each connected component of the
induced forest is required to be a path~\cite{p-milfog-04,p-lvapg-90}.

% We remark that proving that every $n$-vertex plane graph $G(V,E)$ has a vertex set $V'\subseteq V$ with $|V'|\geq n/2$ such that $G[V']$ is an {\em outerplanar} graph is easily done as follows. Let $V_1,V_2,\dots,V_k$ be the {\em outerplane levels} of $G$ (that is, $V_1$ is the set of vertices incident to the outer face of $G$; and, for $i=2,\dots,k$, $V_i$ is the set of vertices incident to the outer face of $G\setminus \{V_1,V_2,\dots,V_{i-1}\}$). Then, graphs $G[V_1\cup V_3 \cup V_5 \cup \dots]$ and $G[V_2\cup V_4 \cup V_6 \cup \dots]$ are both outerplanar. Also, since their vertex sets partition $V$, one of them contains at least $n/2$ vertices.

The topological structure of an outerplane graph is arguably much closer to that of a forest than the one of a non-outerplane graph. Thus the importance of Conjecture~\ref{conj:induced-forests} may justify the study of induced outerplane graphs in plane graphs in its own right.

To complement the results of the paper, we also show the following:

\begin{theorem} \label{th:plane3trees-main}
Every $n$-vertex planar graph and every $n$-vertex planar parital $3$-tree have a \sefenomap.
\end{theorem}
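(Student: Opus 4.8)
The plan is to reduce Theorem~\ref{th:plane3trees-main} to a statement about finding a large \emph{induced outerplane} subgraph in a plane graph, mirroring the strategy used to derive Theorem~\ref{th:sefe-main} from Theorem~\ref{th:outerplane-main}. Specifically, let $G_1$ be the $n$-vertex planar graph and let $G_2$ be the $n$-vertex planar partial $3$-tree. The key is that partial $3$-trees admit planar straight-line drawings with additional freedom (e.g., on any sufficiently general point set), so instead of needing $G_1$ to have an induced outerplane subgraph on $n/2$ vertices, we want to exploit the full vertex set. I would first show that every $n$-vertex plane graph $G_1$ has a vertex subset $V'$ with $|V'| = n$ — trivially all of $V(G_1)$ — no, that is not outerplane; rather the right statement is that a plane graph, possibly after choosing the embedding of $G_1$ appropriately, admits a partition or covering whose pieces interact well with the point-set-drawability of partial $3$-trees.

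More concretely, the natural route is: pick \emph{any} plane embedding of $G_1$, take a straight-line planar drawing $\Gamma(G_1)$ of it (possible by F\'ary's theorem), let $P$ be its $n$-vertex point set with no three vertices collinear (a generic perturbation preserves planarity), and then draw $G_2$ on $P$ using a known result that \emph{every planar partial $3$-tree can be drawn planar with straight-line edges on any point set in general position}. This last ingredient is the crux: such a universality result for partial $3$-trees on arbitrary point sets is exactly what makes the ``no mapping'' freedom unnecessary here, because once $G_2$ is drawn on the same point set $P$ with straight-line edges, any edge shared by $G_1$ and $G_2$ (i.e., any edge connecting two points of $P$ present in both graphs under the chosen bijection) is a straight-line segment in both drawings and hence represented by the same Jordan curve. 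Thus $\Gamma(G_1)$ and the drawing of $G_2$ form a \sefenomap.

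The main obstacle I anticipate is justifying the point-set-universality claim for planar partial $3$-trees: one must cite or prove that every $n$-vertex planar partial $3$-tree admits a planar straight-line drawing on every set of $n$ points in general position. I would locate the relevant result in the literature (results on point-set embeddability of partial $3$-trees, e.g., along the lines of the Kaufmann–Wiese and subsequent work, or the specific theorem that planar partial $3$-trees are point-set-embeddable without bends); if the exactly-matching statement is unavailable, the fallback is to allow a bounded number of bends on $G_2$'s edges, which is still fine for \sefenomap as long as we are careful that \emph{shared} edges are drawn identically — and shared edges can be forced to be straight-line since $G_1$'s edges are straight, so we only need the partial-$3$-tree drawing to route its non-shared edges (with bends if necessary) planarly, which is easier. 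Verifying that the bends do not cross $G_1$'s edges or each other would then be the delicate point, handled by a suitably fine subdivision of the plane around the point set.

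Assembling the argument is then routine: perturb $\Gamma(G_1)$ to kill collinear triples, invoke the partial-$3$-tree drawing result on $P$, observe planarity of both drawings by construction, and conclude that the only possibly shared edges are segments between points of $P$ and are drawn identically in both. This yields Theorem~\ref{th:plane3trees-main}. The contrast with Theorem~\ref{th:sefe-main} is instructive: there, $G_1$ being an arbitrary planar graph forces us to restrict to an induced \emph{outerplane} subgraph (because only outerplane graphs are drawable on every general-position point set), costing a factor of $2$; here, the stronger drawability of partial $3$-trees lets $G_2$ be full-size, so no induced-subgraph reduction on $G_1$ is needed at all.
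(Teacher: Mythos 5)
Your argument hinges on a point-set universality claim for planar partial $3$-trees: that every $n$-vertex planar partial $3$-tree admits a planar \emph{straight-line} drawing on every set of $n$ points in general position. This claim is false, and it fails already for the smallest nontrivial $3$-tree: $K_4$ cannot be drawn planar with straight-line edges on four points in convex position, since the two diagonals of the convex quadrilateral necessarily cross. More generally, the graphs that admit a planar straight-line embedding on \emph{every} point set in general position are known to be exactly the outerplanar graphs (this is the Gritzmann et al.\ result the paper uses for Theorem~\ref{th:sefe-main}); no strictly larger class, and in particular not the partial $3$-trees, has this property. So the "crux" ingredient of your plan does not exist, and the contrast you draw at the end --- that partial $3$-trees have "stronger drawability" than outerplanar graphs on arbitrary point sets --- is backwards.

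Your fallback (draw $G_2$ on $P$ with bends à la Kaufmann--Wiese, forcing only the shared edges to be straight) does not close the gap either, because it presupposes the existence of a planar drawing of $G_2$ on $P$ in which a prescribed subset of edges (those whose endpoints land on a pair of points adjacent in $G_1$) are drawn as the straight segments of $\Gamma(G_1)$. Which edges are "shared" depends on the vertex mapping, which is itself part of what must be constructed, and nothing in your sketch guarantees that some mapping admits a planar drawing of $G_2$ respecting those straight segments; "a suitably fine subdivision of the plane around the point set" does not address the global routing conflicts that arise. The paper takes an entirely different route: it does \emph{not} use straight-line drawings of $G_1$ at all. It fixes an arbitrary (Jordan-curve) planar drawing of a maximal plane graph $G_1$ and recursively draws the plane $3$-tree $G_2$ on top of it, exploiting the recursive split of a plane $3$-tree at the internal vertex $w$ adjacent to all three outer vertices. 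The key technical statement (Lemma~\ref{le:splittingpoints}) places $w$ on a point of $P$ chosen "close to" one side of the enclosing triangle and draws the three curves to $w$ so that they partition the remaining points according to the sizes of the three sub-$3$-trees, coincide with edges of $G_1$ whenever the corresponding curve exists in $\Gamma_1$, and cross every other curve of $\Gamma_1$ at most once; this invariant is what makes the fixed-edge condition go through. That recursive topological construction is the missing idea in your proposal.
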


\section{Proof of Theorem~\ref{th:outerplane-main}} \label{se:outerplane}

In this section we prove Theorem~\ref{th:outerplane-main}. We assume that the input graph $G$ is a {\em maximal} plane graph, that is, a plane graph such that no edge can be added to it while maintaining planarity. In fact, if $G$ is not maximal, then dummy edges can be added to it in order to make it a maximal plane graph $G'$. Then, the vertex set $V'$ of an induced outerplane graph $G'[V']$ in $G'$ induces an outerplane graph in $G$, as well.

\begin{figure}[tb]
\begin{center}
\begin{tabular}{c  c c}
\mbox{\includegraphics[scale=0.25]{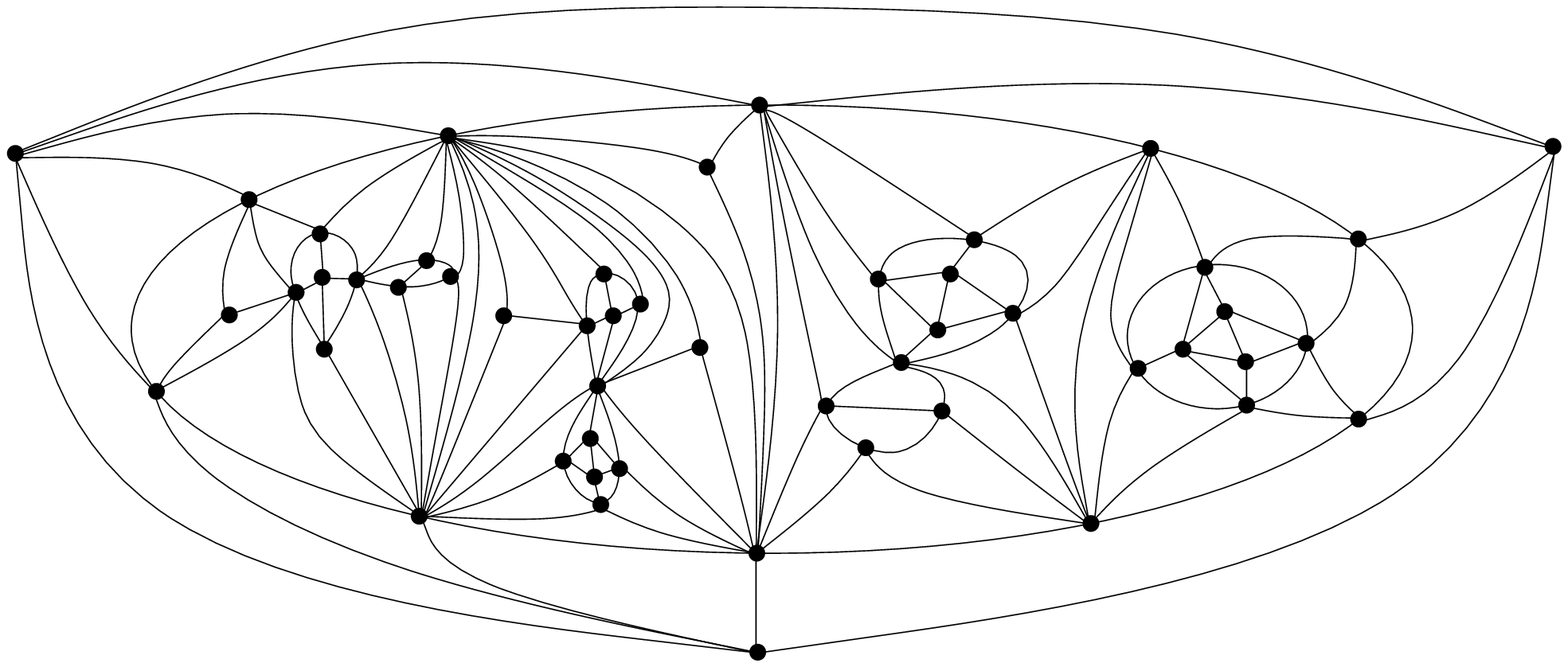}} \hspace{7mm} &
\mbox{\includegraphics[scale=0.32]{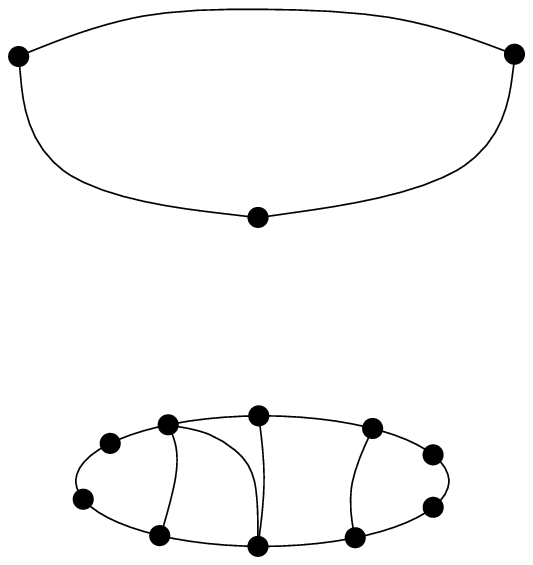}} \hspace{7mm} &
\mbox{\includegraphics[scale=0.32]{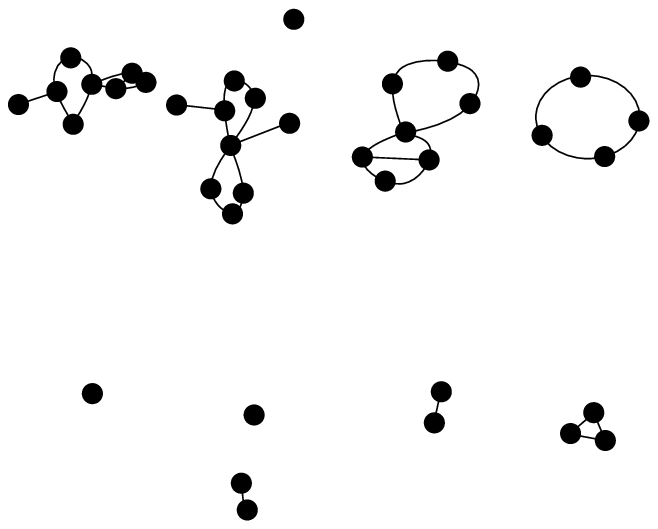}}\\
(a) \hspace{7mm}& (b) \hspace{7mm} & (c)
\end{tabular}
\caption{(a) A maximal plane graph $G$ with outerplanarity $4$. (b) Graphs $G[V_1]$ (on the top) and $G[V_2]$ (on the bottom). (c) Graphs $G[V_3]$ (on the top) and $G[V_4]$ (on the bottom).}
\label{fig:outerplanelevels}
\end{center}
\end{figure}

Let $G^*_1=G$ and, for any $i\geq 1$, let $G^*_{i+1}$ be the plane graph
obtained by removing from $G^*_i$ the set $V_i$ of vertices incident to the
outer face of $G^*_i$ and their incident edges. Vertex set $V_i$ is the {\em
$i$-th outerplane level} of $G$. Denote by $k$ the maximum index such that $V_k$
is non-empty; then $k$ is the {\em outerplanarity} of $G$. For any $1\leq i\leq
k$, graph $G[V_i]$ is a (not necessarily connected) outerplane graph and graph
$G^*_i$ is a (not necessarily connected) {\em internally-triangulated} plane
graph, that is, a plane graph whose internal faces are all triangles. See
Fig.~\ref{fig:outerplanelevels}. For $1\leq i \leq k$, denote by
$H^*_{i,1},\dots,H^*_{i,h_i}$ the connected components of $G^*_i$ and, for
$1\leq j \leq h_i$, denote by $H_{i,j}$ the outerplane graph induced by the
vertices incident to the outer face of $H^*_{i,j}$. Since $G$ is maximal, for
any $1\leq i \leq k$ and for any internal face $f$ of $G[V_i]$, at most one
connected component of $G^*_{i+1}$ lies inside $f$.

A {\em $2$-coloring} $\psi=(W^*,B^*)$ of a graph $H^*$ is a partition of the vertex
set $V(H^*)$ into two sets $W^*$ and $B^*$. We say that the vertices in $W^*$
are \emph{white} and the ones in $B^*$ are \emph{black}. Given a $2$-coloring
$\psi=(W^*,B^*)$ of a plane graph $H^*$, the subgraph $H^*[W^*]$ of $H^*$ is
{\em strongly outerplane} if it is outerplane and it contains no black vertex
inside any of its internal faces. We define the \emph{surplus} of $\psi$ as
$s(H^*,\psi)=|W^*|-|B^*|$.

A {\em cutvertex} in a connected graph $H^*$ is a vertex whose removal
disconnects $H^*$. A \emph{maximal $2$-connected component} of $H^*$, also
called a \emph{block} of $H^*$, is an induced subgraph $H^*[V']$ of $H^*$ such
that $H^*[V']$ is $2$-connected and there exists no $V''\subseteq
V(H^*)$ where $V'\subset V''$ and $H^*[V'']$ is $2$-connected. The {\em
block-cutvertex tree} ${\cal BC}(H^*)$ of $H^*$ is a tree that represents the
arrangement of the blocks of $H^*$ (see Figs.~\ref{fig:bctree}(a)
and~\ref{fig:bctree}(b)). Namely, ${\cal BC}(H^*)$ contains a \emph{${\cal
B}$-node} for each block of $H^*$ and a \emph{${\cal C}$-node} for each
cutvertex of $H^*$; further, there is an edge between a ${\cal B}$-node $b$ and
a ${\cal C}$-node $c$ if $c$ is a vertex of $b$. Given a $2$-coloring
$\psi=(W^*,B^*)$ of $H^*$, the {\em contracted block-cutvertex tree} ${\cal
CBC}(H^*,\psi)$ of $H^*$ is the tree obtained from ${\cal BC}(H^*)$ by
identifying all the ${\cal B}$-nodes that are adjacent to the same black
cut-vertex $c$, and by removing $c$ and its incident edges (see
Fig.~\ref{fig:bctree}(c)). Each node of ${\cal CBC}(H^*,\psi)$ is either a
${\cal C}$-node $c$ or a ${\cal BU}$-node $b$. In the former case, $c$
corresponds to a white ${\cal C}$-node in ${\cal BC}(H^*)$. In the latter case,
$b$ corresponds to a maximal connected subtree ${\cal BC}(H^*(b))$ of ${\cal
BC}(H^*)$ only containing ${\cal B}$-nodes and black ${\cal C}$-nodes. The {\em
subgraph $H^*(b)$ of $H^*$ associated with a ${\cal BU}$-node $b$} is the union
of the blocks of $H^*$ corresponding to ${\cal B}$-nodes in ${\cal BC}(H^*(b))$.
Finally, we denote by $H(b)$ the outerplane graph induced by the vertices
incident to the outer face of $H^*(b)$. We have the following:

\begin{figure}[tb]
\begin{center}
\begin{tabular}{c c c}
\mbox{\includegraphics[scale=0.5]{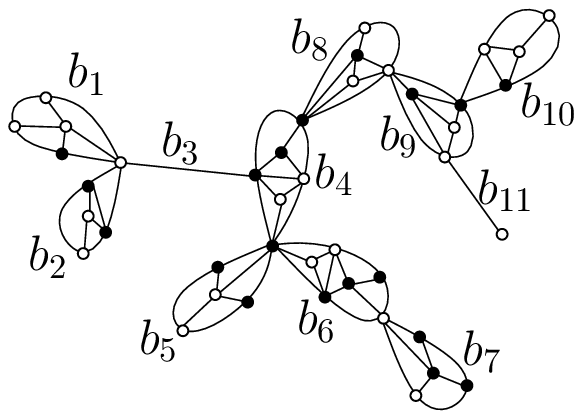}} \hspace{3mm} &
\mbox{\includegraphics[scale=0.55]{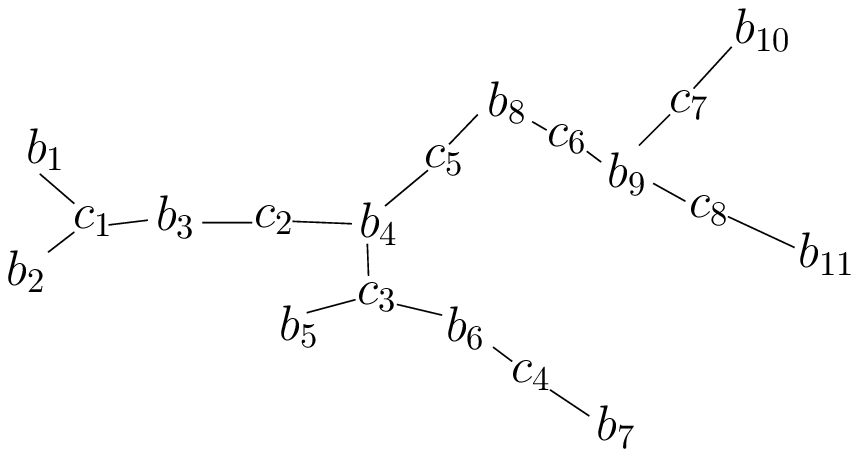}} \hspace{3mm} &
\mbox{\includegraphics[scale=0.55]{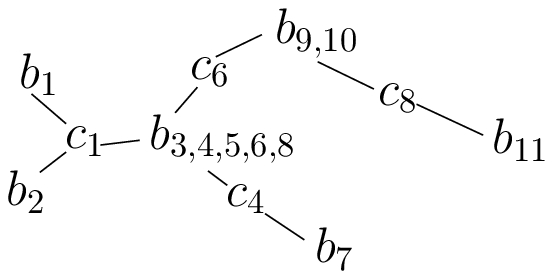}}\\
(a) \hspace{3mm}& (b) \hspace{3mm} & (c)
\end{tabular}
\caption{(a) A connected internally-triangulated plane graph $H^*$ with a $2$-coloring $\psi$, (b) the block-cutvertex tree ${\cal BC}(H^*)$, and (c) the contracted block-cutvertex tree ${\cal CBC}(H^*,\psi)$.}
\label{fig:bctree}
\end{center}
\end{figure}

\begin{lemma} \label{th:inductive}
For any $1\leq i\leq k$ and any $1\leq j\leq h_i$, there exists a $2$-coloring $\psi=(W^*_{i,j},B^*_{i,j})$ of $H^*_{i,j}$ such that:

\begin{itemize}
\item[(1)] the subgraph $H^*_{i,j}[W^*_{i,j}]$ of $H^*_{i,j}$ induced by $W^*_{i,j}$ is strongly outerplane; and
\item[(2)] for any ${\cal BU}$-node $b$ in ${\cal CBC}(H^*_{i,j},\psi)$, one of the following holds:
\begin{itemize}
\item[(a)] $s(H^*_{i,j}(b),\psi)\geq |W^*_{i,j} \cap V(H_{i,j}(b))|+1$;
\item[(b)] $s(H^*_{i,j}(b),\psi)= |W^*_{i,j} \cap V(H_{i,j}(b))|$ and there exists an edge with white end-vertices incident to the outer face of $H^*_{i,j}(b)$; or
\item[(c)] $s(H^*_{i,j}(b),\psi)=1$ and $H^*_{i,j}(b)$ is a single vertex.
\end{itemize}
\end{itemize}
\end{lemma}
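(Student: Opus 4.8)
The plan is to prove Lemma~\ref{th:inductive} by a double induction: an outer induction on the outerplane level $i$, going from the innermost level $k$ down to $1$, and, within a fixed level, an inner induction on the structure of the block-cutvertex tree of each connected component $H^*_{i,j}$. For the base case, consider the innermost level $k$: each component $H^*_{k,j}$ is an internally-triangulated $2$-connected plane graph (or a small degenerate graph), and nothing lies inside its faces. Here I would invoke a suitable $2$-coloring of a single internally-triangulated block---white vertices forming an induced outerplane subgraph, essentially an independent-dominating-type construction on the "second level" of the block---to get surplus $\geq 1$ over the white vertices on the outer face (case (a)), or the degenerate cases (b)/(c). The key local fact I expect to need, and would isolate as a separate sub-lemma, is: \emph{every internally-triangulated $2$-connected plane graph on $m\geq 3$ vertices admits a $2$-coloring whose white set is strongly outerplane, contains all outer-face vertices, and has surplus at least equal to the number of outer vertices plus one} (or the analogous statement when the block is an edge or a triangle). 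This is proved by peeling off the outer face: the outer cycle goes white, and the strictly interior part is an internally-triangulated graph of smaller size on which one recurses, recoloring so that interior white vertices "pay for" the interior black vertices with one to spare.

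For the inductive step on the block-cutvertex tree within level $i$, I would root ${\cal CBC}(H^*_{i,j},\psi)$ at an arbitrary ${\cal BU}$-node and process it bottom-up. At a leaf ${\cal BU}$-node $b$, $H^*_{i,j}(b)$ is a single block together with the black cutvertices hanging off it toward its children---but at a leaf there are none from below, so $H^*_{i,j}(b)$ is essentially one internally-triangulated block, handled by the sub-lemma above, giving case (a) (or (b)/(c) in degenerate situations). At an internal ${\cal BU}$-node, I combine the colorings of the child subtrees with a fresh coloring of the block(s) of $b$ itself; the ${\cal C}$-nodes joining $b$ to its children are white cutvertices that must be colored consistently, and I would charge each such shared white vertex once, using the "$+1$ surplus" guaranteed by case (a) at the child (or the edge-on-outer-face slack of case (b)) to absorb the overlap. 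Summing surpluses over children and over the new block, the arithmetic yields either a surplus exceeding $|W^*\cap V(H_{i,j}(b))|$ (case (a)) or equality with a white outer edge available (case (b)).

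The coupling between consecutive levels enters as follows: when coloring $H^*_{i,j}$, each internal face $f$ of the outerplane graph $H_{i,j}$ may contain a single component $H^*_{i+1,j'}$ of the next level (maximality of $G$); by the outer induction hypothesis that component already carries a good $2$-coloring, and crucially strong outerplanarity of its white part means no black vertex of $H_{i+1,j'}$ is buried inside a white internal face---so the white subgraph of $H^*_{i,j}$ stays strongly outerplane when the two colorings are superimposed. Condition (2) is exactly the invariant that makes this superposition possible without losing the global surplus: the "$+1$" at each ${\cal BU}$-node is the budget spent to attach the enclosed lower-level component and to share cutvertices with neighboring blocks and faces.

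The main obstacle, I expect, is the bookkeeping in condition (2)(a)--(c): one must verify that the surplus genuinely propagates---that combining a parent block with several child subtrees, each of which individually satisfies (a)/(b)/(c), and each sharing exactly one white cutvertex with the parent, still leaves at least the required surplus, and that the degenerate cases (single vertex, single edge, triangle) are not a place where the "$+1$" silently disappears. Getting the inequalities to close in every combination of which sub-case each child falls into, while simultaneously maintaining strong outerplanarity across the glued faces and across levels, is the technical heart of the argument; everything else is a fairly mechanical induction.
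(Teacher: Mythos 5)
Your key sub-lemma is false, and this sinks the plan. You claim that every internally-triangulated $2$-connected plane graph admits a $2$-coloring whose white set is strongly outerplane, \emph{contains all outer-face vertices}, and has surplus at least the number of outer vertices plus one. Take the octahedron drawn with outer triangle $a,b,c$ and inner triangle $d,e,f$: if $a,b,c$ are all white, the white subgraph contains the cycle $(a,b,c)$, so strong outerplanarity forces $d,e,f$ to be black, giving surplus $0$, not $4$. The same obstruction defeats your cross-level superposition step: even if the white part of an enclosed component $H^*_{i+1,k}$ is strongly outerplane, surrounding it by an all-white facial cycle $C_f$ of $H_{i,j}$ creates a white cycle with vertices in its interior, so the combined white subgraph is not strongly outerplane. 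You cannot keep the whole boundary of a face white when something lives inside that face; some vertices of $C_f$ \emph{must} be blackened, and the entire difficulty of the lemma is choosing which ones, and how few.

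That is exactly what the paper's proof supplies and your proposal omits. The paper inducts on outerplanarity, fixes the inductively obtained coloring of each enclosed component $H^*_{i+1,k}$, and then runs a ``cycle-breaker'' procedure on each internal face $f$ of $H_{i,j}$: for each white vertex $u$ on the outer face of a piece $H^*_{i+1,k}(b)$ it blackens the rightmost neighbor $r(u,b)$ of $u$ on $C_f$ (saving one such vertex when a white--white edge on the outer face of the piece is available, and handling trivial faces by blackening a single shared vertex of two facial cycles whenever possible). Internal triangulation and planarity guarantee that every white cycle through $C_f$ enclosing a vertex is thereby broken, and---crucially---the number of blackened vertices on $C_f$ is at most $s(H^*_{i+1,k},\psi)-1$, which is precisely what the inductive invariant (2a)/(2b) (surplus exceeding, or equalling with a spare white outer edge, the number of white \emph{outer} vertices of the component) is designed to pay for. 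Your proposal correctly senses that condition (2) is a budget, but it never identifies what the budget buys (black vertices on the \emph{enclosing} cycle, not in the interior of the block), and your charging direction (``interior white vertices pay for interior black vertices'') is the reverse of the one that works.
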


Lemma~\ref{th:inductive} implies Theorem~\ref{th:outerplane-main} as follows: Since $G$ is a maximal plane graph, $G^*_1$ has one $2$-connected component, hence $H^*_{1,1}(b)=H^*_{1,1}=G^*_1=G$. By Lemma~\ref{th:inductive}, there exists a $2$-coloring $\psi=(W,B)$ of $G$ such that $G[W]$ is an outerplane graph and $|W|-|B|\geq |W \cap V_1| \geq 0$, hence  $|W| \geq n/2$.

We emphasize that Lemma~\ref{th:inductive} shows the existence of a large induced subgraph $H^*_{i,j}[W^*_{i,j}]$ of $H^*_{i,j}$ satisfying an even stronger property than just being outerplane; namely, the $2$-coloring $\psi=(W^*_{i,j},B^*_{i,j})$ is such that $H^*_{i,j}[W^*_{i,j}]$ is outerplane and contains no vertex belonging to $B^*_{i,j}$ in any of its internal faces.

In order to prove Lemma~\ref{th:inductive}, we start by showing some sufficient conditions for a $2$-coloring to induce a strongly outerplane graph in $H^*_{i,j}$. We first state a lemma arguing that a $2$-coloring $\psi$ of $H^*_{i,j}$ satisfies Condition (1) of Lemma~\ref{th:inductive} if and only if it satisfies the same condition ``inside each internal face'' of $H_{i,j}$. For any face $f$ of $H_{i,j}$, we denote by $C_f$ the cycle delimiting $f$; also, we denote by $H^*_{i,j}[W^*_{i,j}(f)]$ the subgraph of $H^*_{i,j}$ induced by the white vertices inside or belonging to $C_f$.

\begin{lemma} \label{le:single-faces}
Let $\psi=(W^*_{i,j},B^*_{i,j})$ be a $2$-coloring of $H^*_{i,j}$. Assume that, for each internal face $f$ of $H_{i,j}$, graph $H^*_{i,j}[W^*_{i,j}(f)]$ is strongly outerplane. Then, $H^*_{i,j}[W^*_{i,j}]$ is strongly outerplane.
\end{lemma}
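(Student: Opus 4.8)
The plan is to prove the contrapositive-flavored direction by showing that the only ways $H^*_{i,j}[W^*_{i,j}]$ can fail to be strongly outerplane are ``localized'' inside a single internal face of $H_{i,j}$, so that the hypothesis rules all of them out. Recall that $H_{i,j}$ is the outerplane graph formed by the vertices on the outer face of $H^*_{i,j}$, and that (since $H^*_{i,j}$ is internally triangulated and inherits its embedding from $G$) every vertex and edge of $H^*_{i,j}$ lies either on the cycle bounding the outer face of some internal face $f$ of $H_{i,j}$ or strictly inside exactly one such face $f$; the faces of $H_{i,j}$ partition the ``interior material'' of $H^*_{i,j}$, with the boundary cycles $C_f$ being the shared parts. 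I would first make this decomposition precise: every vertex of $H^*_{i,j}$ belongs to $V(H^*_{i,j}[W^*_{i,j}(f)])$ for at least one internal face $f$ of $H_{i,j}$, and a vertex lying on a cycle $C_f$ may belong to several such subgraphs, while a vertex interior to $f$ belongs only to the one associated with $f$.

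Second, I would argue outerplanarity of $H^*_{i,j}[W^*_{i,j}]$. Suppose for contradiction it is not outerplane; then it has a vertex $v$ not incident to its outer face, i.e. $v$ is enclosed by some cycle $Z$ of white vertices with no access to the outer face of $H^*_{i,j}[W^*_{i,j}]$. The key claim is that $Z$ must lie entirely within the closed region bounded by a single cycle $C_f$ (including $C_f$ itself) for some internal face $f$ of $H_{i,j}$: a white cycle that is not so confined would have to use a vertex of $H_{i,j}$ together with the structure of $H_{i,j}$ in such a way that it bounds a region containing a vertex on the outer face of $H^*_{i,j}$, contradicting the choice of $Z$ enclosing $v$ away from the outer boundary — here I use that $H_{i,j}$ itself is outerplane, so any cycle through outer-face vertices of $H^*_{i,j}$ bounds a region whose outer side still reaches the outer face. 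Once $Z$ and the vertex $v$ it encloses sit inside a single $C_f$, both belong to $H^*_{i,j}[W^*_{i,j}(f)]$ — and $v$ is not on the outer face of that graph either — contradicting the assumption that $H^*_{i,j}[W^*_{i,j}(f)]$ is outerplane. Hence $H^*_{i,j}[W^*_{i,j}]$ is outerplane.

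Third, I would handle the ``strongly'' part: I must show no black vertex lies inside an internal face $f'$ of $H^*_{i,j}[W^*_{i,j}]$. An internal face $f'$ of $H^*_{i,j}[W^*_{i,j}]$ is bounded by a white cycle $Z'$; by the same confinement argument as above, $Z'$ lies inside the closed region bounded by some $C_f$. Any black vertex inside $f'$ then also lies in that region, hence belongs to $H^*_{i,j}[W^*_{i,j}(f)]$'s ambient region and, more precisely, lies inside an internal face of $H^*_{i,j}[W^*_{i,j}(f)]$ (since the white cycle $Z'$ is present in $H^*_{i,j}[W^*_{i,j}(f)]$ and still bounds a face there, or a face contained in one of its faces). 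This contradicts $H^*_{i,j}[W^*_{i,j}(f)]$ being strongly outerplane. Therefore $H^*_{i,j}[W^*_{i,j}]$ is strongly outerplane, as claimed.

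The main obstacle is the confinement claim in the second step — showing that any white cycle enclosing a vertex (away from the outer face) must be trapped inside a single bounded region $C_f$. This requires carefully using that $H_{i,j}$ is outerplane together with the fact that, by maximality of $G$, at most one connected component of the next level lies inside each internal face of $H_{i,j}$, so the ``stuff'' inside distinct faces of $H_{i,j}$ does not interact. I would make this rigorous by a planarity/Jordan-curve argument: a white cycle $Z$ meeting the outer cycle of $H^*_{i,j}$ in vertices $u_1, u_2, \dots$ splits into arcs alternately inside faces of $H_{i,j}$; if $Z$ used more than one face, following the outer cycle of $H^*_{i,j}$ between consecutive contact vertices shows one side of $Z$ still reaches the outer face, so $Z$ cannot enclose $v$ on the non-outer side unless it stays within one face.
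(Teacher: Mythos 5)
There is a genuine gap in the step you yourself single out as the main obstacle: the confinement claim --- that any white cycle $Z$ enclosing a vertex must lie entirely within the closed region bounded by a single cycle $C_f$ --- is false, and the Jordan-curve sketch you give does not establish it. Concretely, let $(a,b)$ be an internal edge of $H_{i,j}$ shared by two internal faces $f_1$ and $f_2$, let $p_1$ be a white vertex of the component of $G^*_{i+1}$ inside $f_1$ adjacent to both $a$ and $b$, and let $q_1$ be such a vertex inside $f_2$. If $a$ and $b$ are white, then $Z=(a,p_1,b,q_1)$ is a cycle of $H^*_{i,j}[W^*_{i,j}]$ that uses two faces of $H_{i,j}$, and its interior is the union of the two triangles $(a,p_1,b)$ and $(a,q_1,b)$ glued along the chord $(a,b)$; it is easy to arrange the triangulation so that a vertex $x$ of $G^*_{i+1}$ lies inside the triangle $(a,p_1,b)$. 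This $Z$ encloses $x$ yet is not contained in the closed region of any single $C_f$, and its interior meets the outer face of $H^*_{i,j}$ only at the contact vertices $a,b$, which lie \emph{on} $Z$ rather than inside it --- so the observation that ``one side of $Z$ still reaches the outer face'' does not force confinement. (The lemma survives this example only because the chord $(a,b)$ cuts off a smaller, all-white cycle lying inside a single face that still encloses $x$, so the single-face hypothesis is violated there; but extracting that smaller cycle is precisely the argument your proof is missing.)

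The paper closes this gap by taking the enclosing cycle $C$ to be \emph{minimal} and showing that if $C$ is not a subgraph of $H^*_{i,j}[W^*_{i,j}(f)]$ for a single face $f$, then it can be shortened: one takes a maximal subpath $P$ of $C$ whose edges lie in one face $f$, argues that an endpoint $v$ of $P$ must lie on $C_f$, and that $C$ must also contain one of the two neighbours $v'$ of $v$ on $C_f$; since $v'$ is then white, the chord $(v,v')$ splits $C$ into two strictly smaller white cycles, one of which still encloses $x$, contradicting minimality. Your outline needs this minimality-plus-chord step (or an equivalent mechanism for handling cycles that cross between faces through shared boundary vertices and edges); as written, the reduction to a single face does not go through.
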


\begin{proof}
Suppose, for a contradiction, that $H^*_{i,j}[W^*_{i,j}]$ is not strongly outerplane. Then, it contains a simple cycle $C$ that contains in its interior some vertex $x$ in $H^*_{i,j}$. Assume, w.l.o.g., that $C$ is minimal, that is, there exists no cycle $C'$ that contains $x$ in its interior such that $|V(C')|\subset |V(C)|$. By hypothesis, $C$ is not a subgraph of $H^*_{i,j}[W^*_{i,j}(f)]$, for any internal face $f$ of $H_{i,j}$. Consider a maximal path $P$ in $C$ all of whose edges belong to $H^*_{i,j}[W^*_{i,j}(f)]$, for some internal face $f$ of $H_{i,j}$. Let $u$ and $v$ be the end-vertices of $P$; also, let $w$ be the vertex adjacent to $v$ in $C$ and not belonging to $P$. See Fig.~\ref{fig:single-face}. Vertex $v$ belongs to $C_f$, as otherwise edge $(v,w)$ would cross $C_f$, given that $w$ is not inside nor belongs to $C_f$, by the maximality of $P$. Let $v'$ and $v''$ be the vertices adjacent to $v$ on $C_f$. We have that the path $C - P$ obtained from $C$ by removing the edges and the internal vertices of $P$ contains $v'$ or $v''$. In fact, if that's not the case, $C - P$ would pass twice through $v$, which contradicts the fact that $C$ is a simple cycle. However, if $C$ contains one of $v'$ or $v''$, say $v'$, then $v'$ is a white vertex, thus edge $(v,v')$ splits $C$ into two cycles $C'$ and $C''$, with $|V(C')|\subset |V(C)|$ and $|V(C'')|\subset |V(C)|$, one of which contains $x$ in its interior, thus contradicting the minimality of $C$.
\begin{figure}[tb]
\begin{center}
\mbox{\includegraphics[scale=0.55]{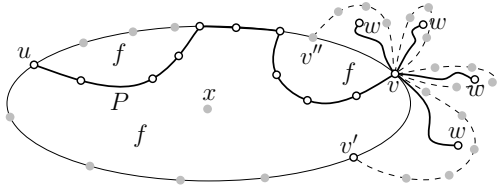}}
\caption{Illustration for the proof of Lemma~\ref{le:single-faces}. The thick solid line represents $P$ together with edge $(v,w)$. The thin solid lines represent the edges of $C_f$ not in $P$. The dashed lines represent some edges of $H_{i,j}$ not in $C_f$. The color of the gray vertices is not important for the proof.}
\label{fig:single-face}
\end{center}
\end{figure}
\end{proof}

%%%%%%%%%%%%%%%%%%%%%%%%%%%%
%%%%%%%%%%%%%%%%%%%%%%%%%%%%
%%%%%%%%%%%%%%%%%%%%%%%%%%%%
%%%%%%%%%%%%%%%%%%%%%%%%%%%%
%%%%%%%%%%%%%%%%%%%%%%%%%%%%

An internal face $f$ of $H_{i,j}$ is {\em empty} if it contains no vertex of $G^*_{i+1}$ in its interior. Also, for a $2$-coloring $\psi$ of $H^*_{i,j}$, an internal face $f$ of $H_{i,j}$ is {\em trivial} if it contains in its interior a connected component $H^*_{i+1,k}$ of $G^*_{i+1}$ that is a single white vertex or such that all the vertices incident to the outer face of $H^*_{i+1,k}$ are black. We have the following.

\begin{lemma} \label{le:trivial-faces}
Let $\psi=(W^*_{i,j},B^*_{i,j})$ be a $2$-coloring of $H^*_{i,j}$ and let $f$ be a trivial face of $H_{i,j}$. Let $H^*_{i+1,k}$ be the connected component of $G^*_{i+1}$ in $f$'s interior. If $H^*_{i+1,k}[W^*_{i,j}]$ is strongly outerplane and if $C_f$ contains at least one black vertex, then $H^*_{i,j}[W^*_{i,j}(f)]$ is strongly outerplane.
\end{lemma}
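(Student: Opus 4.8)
The plan is to establish the two defining properties of strong outerplanarity for $H^*_{i,j}[W^*_{i,j}(f)]$ by directly inspecting the drawing inside $C_f$; concretely, I will show that no cycle of $H^*_{i,j}[W^*_{i,j}(f)]$ encloses a vertex of $H^*_{i,j}$, which implies that $H^*_{i,j}[W^*_{i,j}(f)]$ is outerplane and has no black vertex inside an internal face, i.e.\ is strongly outerplane. First I would record some structure. Every vertex of $H^*_{i,j}$ lying strictly inside $C_f$ is separated from the outer face of $H^*_{i,j}$ by $C_f$, hence belongs to $G^*_{i+1}$, and hence to the unique component of $G^*_{i+1}$ inside $f$, namely $H^*_{i+1,k}$. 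So, writing $W'=W^*_{i,j}\cap V(C_f)$ and $W''=W^*_{i,j}\cap V(H^*_{i+1,k})$, we have $H^*_{i,j}[W^*_{i,j}(f)]=H^*_{i,j}[W'\cup W'']$. Moreover $C_f$ has no chord in $H^*_{i,j}$: such a chord would be an edge of $H^*_{i,j}$ between two vertices incident to its outer face, hence an edge of $H_{i,j}$, and it would contradict either that $f$ is a face of $H_{i,j}$ (if drawn inside $f$) or that its endpoints lie on the outer face of $H^*_{i,j}$ (if drawn outside $f$). Hence $H^*_{i,j}[W']$ is a disjoint union of subpaths of $C_f$ which, since $C_f$ contains a black vertex, does not include the whole cycle; in particular $H^*_{i,j}[W']$ is acyclic.

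Now I would split according to the definition of a trivial face. Suppose first that $H^*_{i+1,k}$ is a single white vertex $z$. Since $H^*_{i,j}$ is internally triangulated and $z$ is the only vertex inside $C_f$, the disk bounded by $C_f$ is tiled by the triangular faces $z,u,u'$ with $uu'\in E(C_f)$, so $z$ is adjacent to every vertex of $C_f$. Thus $H^*_{i,j}[W^*_{i,j}(f)]$ is the union, sharing only $z$, of the fans obtained by joining $z$ to the white subpaths of $C_f$; every cycle of such a graph lies within one fan and has the form $z,q_a,q_{a+1},\dots,q_b,z$ along a subpath, so the region it encloses is a union of triangles $z,q_i,q_{i+1}$, each an empty face of $H^*_{i,j}$. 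Hence no cycle encloses a vertex of $H^*_{i,j}$, and $H^*_{i,j}[W^*_{i,j}(f)]$ is strongly outerplane.

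Suppose instead that every vertex incident to the outer face of $H^*_{i+1,k}$ is black. Then $W''$ consists only of vertices strictly inside the outer cycle $\partial H^*_{i+1,k}$, so $H^*_{i,j}[W'']=H^*_{i+1,k}[W^*_{i,j}]$, which is strongly outerplane by hypothesis; moreover there is no edge of $H^*_{i,j}$ between $W'$ and $W''$, since an edge from $C_f$ into $H^*_{i+1,k}$ ends at a vertex of $H^*_{i+1,k}$ incident to its outer face, which is black. Since $H^*_{i,j}[W']$ is acyclic, every cycle of $H^*_{i,j}[W^*_{i,j}(f)]$ is a cycle of $H^*_{i+1,k}[W^*_{i,j}]$; such a cycle is drawn inside $\partial H^*_{i+1,k}$, so the region it encloses contains only vertices of $H^*_{i+1,k}$, and it encloses none of them because $H^*_{i+1,k}[W^*_{i,j}]$ is strongly outerplane. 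Hence no cycle encloses a vertex of $H^*_{i,j}$, and $H^*_{i,j}[W^*_{i,j}(f)]$ is strongly outerplane.

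The step I expect to demand the most care is not either case analysis but the handling of the embedding inherited from $H^*_{i,j}$: that the two white parts $H^*_{i,j}[W']$ and $H^*_{i,j}[W'']$ do not nest inside one another, that a cycle drawn inside $\partial H^*_{i+1,k}$ cannot enclose a vertex lying outside it, and that the drawing of $H^*_{i+1,k}$ as a subgraph of $H^*_{i,j}$ coincides with the one on which the hypothesis is stated. All of these reduce to the facts used above — that $\partial H^*_{i+1,k}$ separates $C_f$ from the interior of $H^*_{i+1,k}$, and that the black vertex on $C_f$ keeps $H^*_{i,j}[W']$ acyclic — plus a routine check of the degenerate configurations ($W'=\emptyset$, $W''=\emptyset$, or $H^*_{i+1,k}$ a single vertex with no white neighbour on $C_f$).
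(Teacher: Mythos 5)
Your proof is correct and follows essentially the same route as the paper's: the black vertex on $C_f$ (together with chord-freeness) makes the white part of $C_f$ acyclic, and then the two defining subcases of a trivial face are handled exactly as in the paper — in the single-white-vertex case the only vertex interior to $C_f$ must lie on any offending cycle, and in the all-black-boundary case no white--white edge joins $C_f$ to $H^*_{i+1,k}$, so every cycle lives in $H^*_{i+1,k}[W^*_{i,j}]$ and is handled by the hypothesis. You phrase it as a direct argument and spell out a few details the paper leaves implicit (no chords of $C_f$, the triangulated fan around the single white vertex), but the underlying ideas coincide.
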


\begin{proof}
Suppose, for a contradiction, that $H^*_{i,j}[W^*_{i,j}(f)]$ is not strongly outerplane. Then, it contains a simple cycle $C$ that contains in its interior some vertex $x$ in $H^*_{i,j}$. Cycle $C$ contains at least one vertex of $C_f$ by the assumption that $H^*_{i+1,k}[W^*_{i,j}]$ is outerplane. Also, $C$ does not coincide with $C_f$, since $C_f$ contains at least one black vertex. Then, $C$ contains vertices of $C_f$ and vertices internal to $C_f$. This provides a contradiction in the case in which $H^*_{i+1,k}$ is a single white vertex, as no other vertex is internal to any cycle in $H^*_{i,j}[W^*_{i,j}(f)]$, and it provides a contradiction in the case in which all the vertices incident to the outer face of $H^*_{i+1,k}$ are black, as no edge connects a white vertex of $C_f$ with a white vertex internal to $C_f$.
\end{proof}

%%%%%%%%%%%%%%%%%%%%%%%%%%%%%%%%%%%%%%%
%%%%%%%%%%%%%%%%%%%%%%%%%%%%%%%%%%%%%%%
%%%%%%%%%%%%%%%%%%%%%%%%%%%%%%%%%%%%%%%
%%%%%%%%%%%%%%%%%%%%%%%%%%%%%%%%%%%%%%%
%%%%%%%%%%%%%%%%%%%%%%%%%%%%%%%%%%%%%%%

We now prove Lemma~\ref{th:inductive} by induction on the outerplanarity of $H^*_{i,j}$.

In the base case, the outerplanarity of $H^*_{i,j}$ is $1$; then, color white all the vertices of $H^*_{i,j}$. Since the outerplanarity of  $H^*_{i,j}$ is $1$, then $H^*_{i,j}[W^*_{i,j}]=H^*_{i,j}$ is an outerplane graph, thus satisfying Condition (1) of Lemma~\ref{th:inductive}. Also, consider any ${\cal BU}$-node $b$ in the contracted block-cutvertex tree ${\cal CBC}(H^*_{i,j},\psi)$ (which coincides with the block-cutvertex tree ${\cal BC}(H^*_{i,j})$, given that all the vertices of $H^*_{i,j}$ are white). All the vertices of $H^*_{i,j}(b)$ are white, hence either Condition (2b) or Condition (2c) of Lemma~\ref{th:inductive} is satisfied, depending on whether $H^*_{i,j}(b)$ has or does not have an edge, respectively.

In the inductive case, the outerplanarity of $H^*_{i,j}$ is greater than $1$.

First, we inductively construct a $2$-coloring
$\psi_k=(W^*_{i+1,k},B^*_{i+1,k})$, satisfying the conditions of
Lemma~\ref{th:inductive}, of each connected component $H^*_{i+1,k}$ of
$G^*_{i+1}$, for $1\leq k \leq h_{i+1}$. The $2$-coloring $\psi$ of $H^*_{i,j}$
is such that each connected component $H^*_{i+1,k}$ of $G^*_{i+1}$ that lies
inside an internal face of $H_{i,j}$ ``maintains'' the coloring $\psi_k$, i.e.,
a vertex of $H^*_{i+1,k}$ is white in $\psi$ if and only if it is white in
$\psi_k$. Then, in order to determine $\psi$, it suffices to describe how to
color the vertices of $H_{i,j}$.

Second, we look at the internal faces of $H_{i,j}$ one at a time. When we look at a face $f$, we determine a set $B_f$ of vertices of $C_f$ that are colored black. This is done in such a way that the graph $H^*_{i,j}[W^*_{i,j}(f)]$ is strongly outerplane even if we color white all the vertices in $V(C_f)\setminus B_f$. By Lemma~\ref{le:single-faces}, a $2$-coloring of $H^*_{i,j}$ such that $H^*_{i,j}[W^*_{i,j}(f)]$ is strongly outerplane for every internal face $f$ of $H_{i,j}$ is such that $H^*_{i,j}[W^*_{i,j}]$ is strongly outerplane. We remark that, when a set $B_f$ of vertices of $C_f$ are colored black, the vertices in $V(C_f)\setminus B_f$ are not necessarily colored white, as a vertex in $V(C_f)\setminus B_f$ might belong to the set $B_{f'}$ of vertices that are colored black for a face $f'\neq f$ of $H_{i,j}$. In fact, only after the set $B_f$ of vertices of $C_f$ are colored black for {\em every} internal face $f$ of $H_{i,j}$, are the remaining uncolored vertices in $H_{i,j}$ colored white.

We now describe in more detail how to color the vertices of $H_{i,j}$. We show an algorithm, that we call {\em algorithm cycle-breaker}, that associates a set $B_f$ to each internal face $f$ of $H_{i,j}$ as follows.

{\bf Empty faces:} For any empty face $f$ of $H_{i,j}$, let $B_f=\emptyset$.

{\bf Trivial faces:} While there exists a vertex $v^*_{1,2}$ incident to two
trivial faces $f_1$ and $f_2$ of $H_{i,j}$ to which no sets $B_{f_1}$ and
$B_{f_2}$ have been associated yet, respectively, let
$B_{f_1}=B_{f_2}=\{v^*_{1,2}\}$. When no such vertex exists, for any trivial
face $f$ of $H_{i,j}$ to which no set $B_f$ has been associated yet, let $v$ be
any vertex of $C_f$ and let $B_f=\{v\}$.

{\bf Non-trivial non-empty faces:} Consider any non-trivial non-empty internal
face $f$ of $H_{i,j}$. Denote by $H^*_{i+1,k}$ the connected component of
$G^*_{i+1}$ inside $f$. By induction, for any ${\cal BU}$-node $b$ in the
contracted block-cutvertex tree ${\cal CBC}(H^*_{i+1,k},\psi_k)$, it holds
$s(H^*_{i+1,k}(b),\psi_k)\geq |W^*_{i+1,k} \cap V(H_{i+1,k}(b))|+1$, or
$s(H^*_{i+1,k}(b),\psi_k)= |W^*_{i+1,k} \cap V(H_{i+1,k}(b))|$ and there exists
an edge incident to the outer face of $H^*_{i+1,k}(b)$ whose both end-vertices
are white.

We repeatedly perform the following actions: (i) We pick any ${\cal BU}$-node
$b$ that is a leaf in ${\cal CBC}(H^*_{i+1,k},\psi_k)$; (ii) we insert some
vertices of $C_f$ in $B_f$, based on the structure and the coloring of
$H^*_{i+1,k}(b)$; and (iii) we remove $b$ from ${\cal CBC}(H^*_{i+1,k},\psi_k)$,
possibly also removing its adjacent cutvertex, if it has degree one. We describe
in more detail action (ii).

For every white vertex $u$ incident to the outer face of $H^*_{i+1,k}(b)$, we
define the {\em rightmost neighbor $r(u,b)$ of $u$ in $C_f$ from $b$} as
follows. Denote by $u'$ the vertex following $u$ in the clockwise order of the
vertices along the cycle delimiting the outer face of $H^*_{i+1,k}(b)$. Vertex
$r(u,b)$ is the vertex preceding $u'$ in the clockwise order of the neighbors of
$u$. Observe that, since $H^*_{i,j}$ is internally-triangulated, then $r(u,b)$
belongs to $C_f$. Also, $r(u,b)$ is well-defined because $u$ is not a cutvertex
(in fact, it might be a cutvertex of $H^*_{i+1,k}$, but it is not a cutvertex of
$H^*_{i+1,k}(b)$, since such a graph contains no white cut-vertex).

Suppose that $s(H^*_{i+1,k}(b),\psi_k)\geq |W^*_{i+1,k} \cap V(H_{i+1,k}(b))|+1$. Then, for every white vertex $u$ incident to the outer face of $H^*_{i+1,k}(b)$, we add $r(u,b)$ to $B_f$.

Suppose that $s(H^*_{i+1,k}(b),\psi_k)= |W^*_{i+1,k} \cap V(H_{i+1,k}(b))|$ and there exists an edge $(v,v')$ incident to the outer face of $H^*_{i+1,k}(b)$ such that $v$ and $v'$ are white. Assume, w.l.o.g., that $v'$ follows $v$ in the clockwise order of the vertices along the cycle delimiting the outer face of $H^*_{i+1,k}(b)$. Then, for every white vertex $u\neq v$ incident to the outer face of $H^*_{i+1,k}(b)$, we add $r(u,b)$ to $B_f$.

After the execution of algorithm cycle-breaker, a set $B_f$ has been defined for every internal face $f$ of $H_{i,j}$. Then, color black all the vertices in $\bigcup_{f} B_f$, where the union is over all the internal faces $f$ of $H_{i,j}$. Also, color white all the vertices of $H_{i,j}$ that are not colored black. Denote by $\psi=(W^*_{i,j},B^*_{i,j})$ the resulting coloring of $H^*_{i,j}$. We have the following lemma, that completes the induction, and hence the proof of Lemma~\ref{th:inductive}.

\begin{lemma} \label{le:correctness}
Coloring $\psi$ satisfies Conditions (1) and (2) of Lemma~\ref{th:inductive}.
\end{lemma}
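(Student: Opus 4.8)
The plan is to verify the two conditions of Lemma~\ref{th:inductive} for the coloring $\psi$ produced by algorithm cycle-breaker, relying on the two face-local lemmas (Lemma~\ref{le:single-faces} and Lemma~\ref{le:trivial-faces}) and on the inductive hypothesis applied to each connected component $H^*_{i+1,k}$ of $G^*_{i+1}$. For Condition~(1), by Lemma~\ref{le:single-faces} it suffices to show that $H^*_{i,j}[W^*_{i,j}(f)]$ is strongly outerplane for every internal face $f$ of $H_{i,j}$. I would split into cases according to the type of $f$. If $f$ is empty, then only vertices of $C_f$ can be white inside or on $C_f$, and $C_f$ is a cycle, so $H^*_{i,j}[W^*_{i,j}(f)]$ is a subgraph of a cycle, hence outerplane with no vertex inside. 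If $f$ is trivial, I would check that the set $B_f$ chosen by cycle-breaker puts at least one black vertex on $C_f$ (the ``trivial faces'' step always adds one vertex of $C_f$ to $B_f$), and that $H^*_{i+1,k}[W^*_{i,j}]$ is strongly outerplane (immediate from the definition of trivial face: the component inside is either a single white vertex or has an all-black outer boundary), so Lemma~\ref{le:trivial-faces} applies. If $f$ is non-trivial and non-empty, I must argue directly: the white vertices of $C_f$ together with the white vertices of the component $H^*_{i+1,k}$ inside $f$ form a strongly outerplane graph. Here the key geometric fact is that for each leaf ${\cal BU}$-node $b$ processed, adding the rightmost neighbors $r(u,b)$ to $B_f$ ``cuts'' every edge from a white boundary vertex $u$ of $H^*_{i+1,k}(b)$ to a vertex of $C_f$ lying to the right of the next boundary vertex $u'$; combined with the fact that each $H^*_{i+1,k}(b)$ is itself strongly outerplane (inductively) and contains no white cutvertex, this prevents any white cycle through both $C_f$ and the interior from enclosing a vertex.

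**Next I would handle Condition~(2).** Let $b$ be a ${\cal BU}$-node of ${\cal CBC}(H^*_{i,j},\psi)$. Its associated subgraph $H^*_{i,j}(b)$ is a union of blocks of $H^*_{i,j}$ together with all the components of $G^*_{i+1}$ lying inside faces of the corresponding part of $H_{i,j}$, and with black cutvertices identifying the blocks. The surplus $s(H^*_{i,j}(b),\psi)$ decomposes as the sum of (i) $|W^*_{i,j}\cap V(H_{i,j}(b))| - |B^*_{i,j}\cap V(H_{i,j}(b))|$, the contribution of the outerplane ``ring'' of $H^*_{i,j}(b)$, i.e.\ of the vertices of $H_{i,j}$ on $b$, and (ii) $\sum_k s(H^*_{i+1,k}(b_k),\psi_k)$ over the relevant lower-level ${\cal BU}$-subgraphs, which by the inductive hypothesis (cases (2a)--(2c)) is at least $\sum_k |W^*_{i+1,k}\cap V(H_{i+1,k}(b_k))| \ge 0$. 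So it suffices to bound the ``ring'' contribution (i) from below and, when it equals $-1$ or $0$, to recover the extra $+1$ or the white boundary edge required by (2a) or (2b) from the inductive surplus of some interior component. The counting here is the crux: I would show that the vertices of $C_f$ that cycle-breaker colors black for a non-trivial non-empty face $f$ are ``paid for'' by white vertices of the interior component $H^*_{i+1,k}$ — precisely, the map $u\mapsto r(u,b)$ is injective over the processed white boundary vertices $u$, so each black vertex added charges a distinct white interior vertex, and the inductive surplus inequality (a strict $+1$ in case (2a), or an edge in case (2b) letting us skip one white vertex $v$) exactly covers the deficit. For trivial and empty faces the accounting is simpler: an empty face contributes no black vertices, and a trivial face contributes one black vertex of $C_f$ but this vertex is shared (by the pairing step) between two trivial faces whenever possible, or is charged against structure that the trivial-face definition guarantees.

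**The main obstacle** I expect is the bookkeeping in the non-trivial non-empty case of Condition~(2): making the charging argument rigorous when a single vertex of $C_f$ is the rightmost neighbor $r(u,b)$ for white vertices $u$ coming from \emph{several} different ${\cal BU}$-nodes $b$ inside $f$, or when the same boundary vertex of $H_{i,j}$ sits on two non-trivial non-empty faces and would be counted twice. One has to check that $\bigcup_f B_f$ is not ``too large'' — i.e.\ that the black vertices of $C_f$ are in bijection (not just surjection) with distinct white interior witnesses, across \emph{all} faces simultaneously — and simultaneously that removing exactly those vertices from white really does break every white cycle enclosing a vertex. I would isolate this as a sub-lemma: for each non-trivial non-empty face $f$, $|B_f \cap (V(C_f))| \le \sum_{b} |W^*_{i+1,k}\cap V(H_{i+1,k}(b))|$ with the leftover $+1$'s from case (2a) and the white edges from case (2b) being enough to also cover the $+1$ or the outer white edge demanded of $H^*_{i,j}(b)$ itself, and then combine over faces using that each vertex of $H_{i,j}$ lies on exactly two internal faces of $H_{i,j}$ (as $H_{i,j}$ is a maximal outerplane structure on a cycle) — a vertex made black by one face need not be re-counted by the other.
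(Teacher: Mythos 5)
Your plan follows the paper's overall strategy (reduce Condition~(1) to single faces via Lemma~\ref{le:single-faces}, dispatch empty and trivial faces via Lemma~\ref{le:trivial-faces}, and charge the black vertices of each non-trivial non-empty face against the surplus of the interior component), but as written it has three genuine gaps. First, for Condition~(1) on a non-trivial non-empty face, ``adding the rightmost neighbours cuts every edge \dots and prevents any white cycle from enclosing a vertex'' is the conclusion, not an argument: the actual proof takes a minimal enclosing white cycle $C$, a maximal subpath $P$ of $C$ lying on $C_f$, and derives either a crossing or a violation of minimality through a case analysis on whether $r(u',b)$ was inserted into $B_f$ and on whether $u'=v'$, using crucially that $H^*_{i,j}$ is internally triangulated. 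You also omit the case in which $C$ coincides with $C_f$ itself, which is exactly where the non-triviality of $f$ is used to guarantee at least one black vertex on $C_f$. Second, your surplus decomposition is incorrect as stated: $\sum_l s(H^*_{i+1,k}(b_l),\psi)$ over the ${\cal BU}$-nodes of a single component is \emph{not} $s(H^*_{i+1,k},\psi)$, because white cutvertices belong to several subgraphs $H^*_{i+1,k}(b_l)$ and are counted once per occurrence; the correct identity is $\sum_l s(H^*_{i+1,k}(b_l),\psi) = s(H^*_{i+1,k},\psi) + n_{bu}-1$, and the excess $n_{bu}-1$ must be absorbed by the fact that each ${\cal BU}$-node contributes at most $s(H^*_{i+1,k}(b_l),\psi)-1$ vertices to $B_f$. (Conversely, the injectivity of $u\mapsto r(u,b)$ that you single out as the main obstacle is a non-issue: you only need an upper bound on $|B_f|$, and collisions only make it smaller.)

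Third, and most seriously, you never prove Condition~(2b) for the node $b$ itself in the case where the surplus comes out exactly equal to $|W^*_{i,j}\cap V(H_{i,j}(b))|$, i.e., when all faces of $H_{i,j}(b)$ are trivial or empty and no two trivial faces share a vertex. There one must exhibit an edge on the outer face of $H^*_{i,j}(b)$ with both endpoints white; this needs a separate, non-trivial argument (in the paper: take a trivial face, observe that exactly one of its vertices is black so it has a white--white edge, choose such an edge maximal with respect to the outer boundary walk $u_1,\dots,u_l$, and derive a contradiction with maximality if $u_2$ is black). The phrase ``charged against structure that the trivial-face definition guarantees'' does not supply this. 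Until these three points are filled in, the proposal is an outline of the right proof rather than a proof.
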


\begin{proof}
{\em We prove that $\psi$ satisfies Condition (1) of Lemma~\ref{th:inductive}}. Namely, we prove that, for every internal face $f$ of $H_{i,j}$, graph $H^*_{i,j}[W^*_{i,j}(f)]$ is strongly outerplane. By Lemma~\ref{le:single-faces}, this implies that graph $H^*_{i,j}[W^*_{i,j}]$ is strongly outerplane.

For any {\bf empty face} $f$ of $H_{i,j}$, graph $H^*_{i,j}[W^*_{i,j}(f)]$ is strongly outerplane, as no vertex of $H^*_{i,j}$ is in the interior of $f$.

By construction, for any {\bf trivial face} $f$ of $H_{i,j}$, there is a black vertex in $C_f$; hence, by Lemma~\ref{le:trivial-faces}, graph $H^*_{i,j}[W^*_{i,j}(f)]$ is strongly outerplane.

Let $f$ be any {\bf non-empty non-trivial internal face} of $H_{i,j}$. Denote by $H^*_{i+1,k}$ the connected component of $G^*_{i+1}$ inside $f$. Such a component exists because $f$ is non-empty. Suppose, for a contradiction, that $H^*_{i,j}[W^*_{i,j}(f)]$ is not strongly outerplane. Then, it contains a simple cycle $C$ that contains in its interior some vertex $x$ in $H^*_{i,j}$. Assume, w.l.o.g., that $C$ is minimal, that is, there exists no cycle $C'$ in $H^*_{i,j}[W^*_{i,j}(f)]$ that contains $x$ in its interior and such that $|V(C')|\subset |V(C)|$.

Cycle $C$ contains at least one vertex of $C_f$, since by induction $H^*_{i+1,k}[W^*_{i,j}]$ is strongly outerplane. Suppose, for a contradiction, that $C$ coincides with $C_f$. Since $f$ is non-trivial, when algorithm cycle-breaker picks the first leaf $b$ in ${\cal CBC}(H^*_{i+1,k},\psi_k)$, there exists at least one vertex $u$ incident to the outer face of $H^*_{i+1,k}(b)$ that is white. Then, either $r(u,b)\in V(C_f)$ is black, thus obtaining a contradiction, or there exists an edge $(u,u')$ incident to the outer face of $H^*_{i+1,k}(b)$ such that $u$ and $u'$ are both white, where $u'$ follows $u$ in the clockwise order of the vertices along the cycle delimiting the outer face of $H^*_{i+1,k}(b)$. Then $r(u',b)\in V(C_f)$ is black, thus obtaining a contradiction. Hence, we can assume that $C$ contains vertices of $C_f$ {\em and} vertices in the interior of $C_f$.

Consider a maximal path $P$ in $C$ all of whose edges belong to $C_f$. Let $u$ and $v$ be the end-vertices of $P$. Let $u'$ and $v'$ be the vertices adjacent to $u$ and $v$ in $C$, respectively, and not belonging to $P$. By the maximality of $P$, we have that $u'$ and $v'$ are in the interior of $C_f$. It might be the case that $u=v$ or that $u'=v'$ (however the two equalities do not hold simultaneously). Assume w.l.o.g. that $u'$, $u$, $v$, and $v'$ appear in this clockwise order along $C$. Denote by $b$ any node of ${\cal CBC}(H^*_{i+1,k},\psi_k)$ such that $u'$ belongs to $H^*_{i+1,k}(b)$.

Suppose first that algorithm cycle-breaker inserted vertex $r(u',b)$ into $B_f$ as the rightmost neighbor of $u'$ in $C_f$ from $b$. Assume also that $v'\neq u'$. By the assumption on the clockwise order of the vertices along $C$ and since all the vertices of $P$ are white, we have that edge $(v,v')$ crosses edge $(u',r(u',b))$, a contradiction to the planarity of $H^*_{i+1,k}(b)$ (see Fig.~\ref{fig:correctness}(a)). Assume next that $v'=u'$. Consider the edge $(u',u'')$ that follows $(u',u)$ in the clockwise order of the edges incident to $u'$. (Note that $u'' \neq v$, otherwise $C$ would be an empty triangle.) If $u''$ belongs to $C_f$, then it either belongs to $P$ or it does not. In the former case, a cycle $C'$ can be obtained from $C$ by replacing path $(u',u,u'')$ with edge $(u',u'')$; since $H^*_{i,j}$ is internally-triangulated, then $(u,u',u'')$ is a cycle delimiting a face of $H^*_{i,j}$, hence if $C$ contains a vertex $x$ in its interior, then $C'$ contains $x$ in its interior as well, thus contradicting the minimality of $C$ (see Fig.~\ref{fig:correctness}(b)). In the latter case, edge $(u',u'')$ crosses $C$, thus contradicting the planarity of $H^*_{i,j}$. We can hence assume that $u''$ belongs to $H^*_{i+1,k}$. Let $b'$ be the node in ${\cal CBC}(H^*_{i+1,k},\psi_k)$ such that $H^*_{i+1,k}(b')$ contains edge $(u',u'')$ (it is possible that $b'=b$). Observe that, by the planarity of $H^*_{i,j}$, graph $H^*_{i+1,k}(b')-u'$ lies entirely in the interior of $C$. This implies that $u$ is the rightmost neighbor $r(u',b')$ of $u'$ in $C_f$ from $b'$. Thus, if algorithm cycle-breaker inserted $r(u',b')$ into $B_f$, then we immediately get a contradiction to the fact that $u$ is white. Otherwise, vertex $u''$ is white, and algorithm cycle-breaker inserted into $B_f$ the rightmost neighbor $r(u'',b')$ of $u''$ in $C_f$ from $b'$. Since every vertex of $P$ is white, then $r(u'',b')$ does not belong to $P$, hence edge $(u'',r(u'',b'))$ crosses edge $(u',u)$ or edge $(u',v)$, a contradiction to the planarity of $H^*_{i,j}$ (see Fig.~\ref{fig:correctness}(c)).

\begin{figure}[tb]
\begin{center}
\begin{tabular}{c c c c}
\mbox{\includegraphics[scale=0.48]{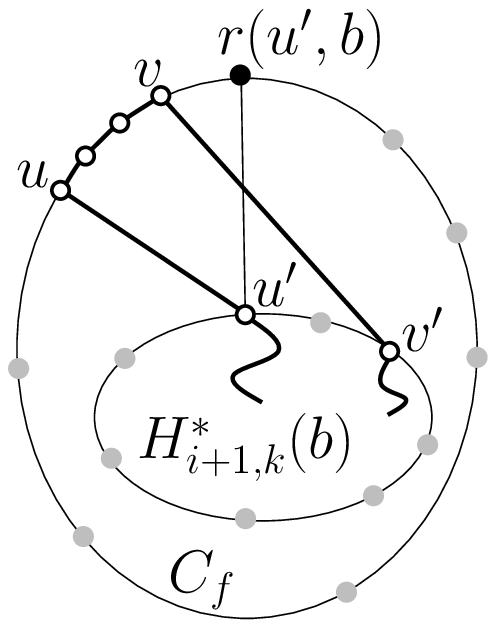}} \hspace{2mm} &
\mbox{\includegraphics[scale=0.48]{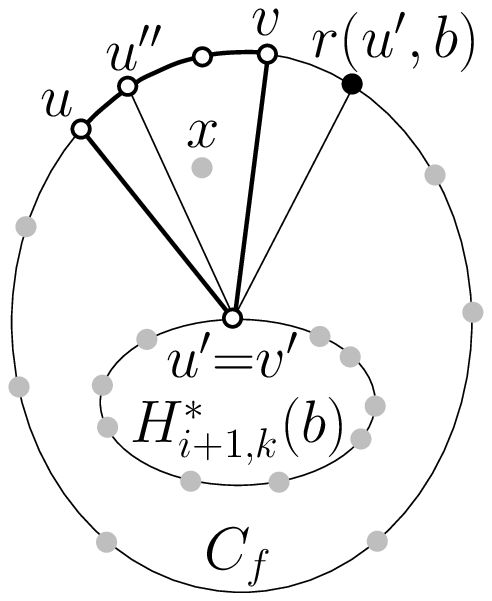}} \hspace{2mm} &
\mbox{\includegraphics[scale=0.48]{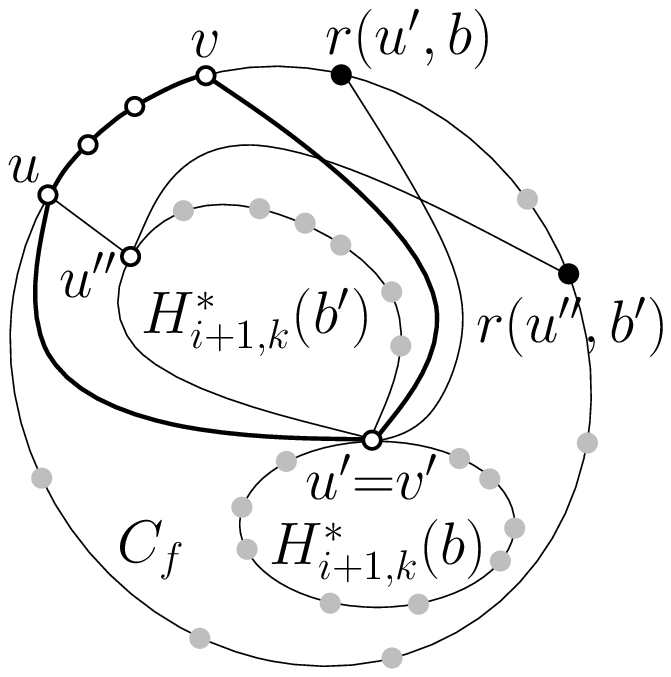}} \hspace{1mm} &
\mbox{\includegraphics[scale=0.48]{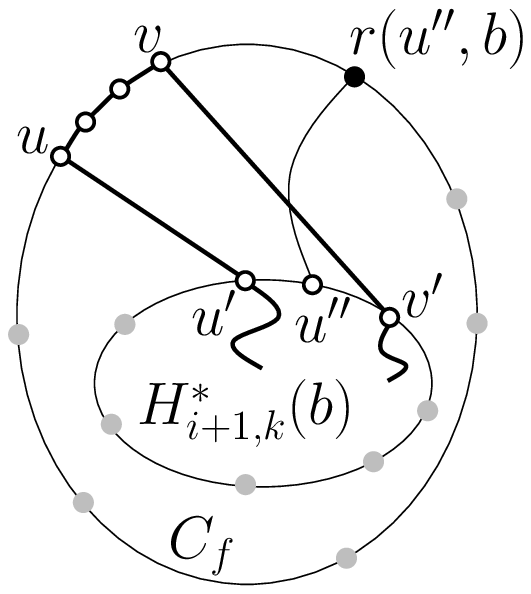}}\\
(a) \hspace{2mm}& (b) \hspace{2mm} & (c) \hspace{1mm} & (d)
\end{tabular}
\caption{Contradiction to the existence of a cycle $C$ (shown by thick lines) in $H^*_{i,j}[W^*_{i,j}(f)]$ containing a vertex $x$ in its interior. The color of the gray vertices is not important for the proof. Figures (a), (b), and (c) illustrate the case in which algorithm cycle-breaker inserted a vertex $r(u',b)$ into $B_f$, and in which $v'\neq u'$ (a), $v'= u'$ and $u'' \in V(P)$ (b), and $v'= u'$ and $u''\in V(H^*_{i+1,k})$ (c). Figure (d) illustrates the case in which algorithm cycle-breaker did not insert a vertex $r(u',b)$ into $B_f$ and $v'\neq u',u''$.}
\label{fig:correctness}
\end{center}
\end{figure}

Suppose next that algorithm cycle-breaker did not insert vertex $r(u',b)$ into $B_f$ as the rightmost neighbor of $u'$ in $C_f$ from $b$. Denote by $u''$ the vertex that follows $u'$ in the clockwise order of the vertices along the outer face of $H^*_{i+1,k}(b)$. Assume first that $v'=u''$. Then, edge $(u',v')$ splits $C$ into two cycles $C'$ and $C''$, with $|V(C')|\subset |V(C)|$ and $|V(C'')|\subset |V(C)|$, one of which contains $x$ in its interior, thus contradicting the minimality of $C$. Hence, we can assume that $v' \neq u''$. In the case in which $v' = u'$, a contradiction can be derived with {\em exactly} the same arguments as in the case in which algorithm cycle-breaker inserted vertex $r(u',b)$ into $B_f$. Hence, we can assume that $v' \neq u'$. Since algorithm cycle-breaker did not insert vertex $r(u',b)$ into $B_f$, it follows that $u''$ is white and algorithm cycle-breaker inserted into $B_f$ the rightmost neighbor $r(u'',b)$ of $u''$ in $C_f$ from $b$. Since every vertex of $P$ is white, it follows that $r(u'',b)$ does not belong to $P$. Hence, edge $(u'',r(u'',b))$ crosses edge $(u,u')$ or edge $(v,v')$, a contradiction to the planarity of $H^*_{i+1,k}(b)$ (see Fig.~\ref{fig:correctness}(d)).

{\em We prove that $\psi$ satisfies Condition (2) of Lemma~\ref{th:inductive}}. Consider any ${\cal BU}$-node $b$ in the contracted block-cutvertex tree ${\cal CBC}(H^*_{i,j},\psi)$. Denote by $H_{i,j}(b)$ the outerplane graph induced by the vertices incident to the outer face of $H^*_{i,j}(b)$ or, equivalently, the subgraph of $H_{i,j}$ induced by the vertices in $H^*_{i,j}(b)$.

We distinguish three cases. In {\em Case A}, graph $H_{i,j}(b)$ contains at least one non-trivial non-empty internal face; in {\em Case B}, all the faces of $H_{i,j}(b)$ are either trivial or empty, and there exists a vertex $v^*_{1,2}$ incident to two trivial faces $f_1$ and $f_2$ of $H_{i,j}(b)$; finally, in {\em Case C}, all the faces of $H_{i,j}(b)$ are either trivial or empty, and there exists no vertex incident to two trivial faces and of $H_{i,j}(b)$. We prove that, in Cases~A and~B, Condition (2a) of Lemma~\ref{th:inductive} is satisfied, while in Case~C, Condition (2b) of Lemma~\ref{th:inductive} is satisfied.

In all cases, the surplus $s(H^*_{i,j}(b),\psi)$ is the sum of the surpluses $s(H^*_{i+1,k},\psi)$ of the connected components $H^*_{i+1,k}$ of $G^*_{i+1}$ inside the internal faces of $H_{i,j}(b)$, plus the number $|W^*_{i,j} \cap V(H_{i,j}(b))|$ of white vertices in $H_{i,j}(b)$, minus the number $|B^*_{i,j} \cap V(H_{i,j}(b))|$ of black vertices in $H_{i,j}(b)$, which is equal to $|\bigcup_{f} B_f|$. Denote by $n_t$ the number of trivial faces of $H_{i,j}(b)$ and by $n_n$ the number of non-trivial non-empty internal faces of $H_{i,j}(b)$.

\begin{itemize}
\item We discuss {\bf Case A}. First, the number of vertices inserted in $\bigcup_{f} B_f$ by algorithm cycle-breaker when looking at trivial faces of $H_{i,j}(b)$ is at most $n_t$, as for every trivial face $f$ of $H_{i,j}(b)$ at most one vertex is inserted into $B_f$. Also, the sum of the surpluses $s(H^*_{i+1,k},\psi)$ of the connected components $H^*_{i+1,k}$ of $G^*_{i+1}$ inside trivial faces of $H_{i,j}(b)$ is $n_t$, given that each connected component $H^*_{i+1,k}$ inside a trivial face is either a single white vertex, or it is such that all the vertices incident to the outer face of $H^*_{i+1,k}$ are black (hence by induction $s(H^*_{i+1,k},\psi)\geq |W^*_{i+1,k} \cap V(H_{i+1,k})|+1 = 1$).

    In the following, we prove that, for every non-trivial non-empty internal face $f$ of $H_{i,j}(b)$ containing a connected component $H^*_{i+1,k}$ of $G^*_{i+1}$ in its interior, algorithm cycle-breaker inserts into $B_f$ at most $s(H^*_{i+1,k},\psi)-1$ vertices. The claim implies that Condition (2a) of Lemma~\ref{th:inductive} is satisfied by $H^*_{i,j}(b)$. In fact, (1) the sum of the surpluses $s(H^*_{i+1,k},\psi)$ of the connected components $H^*_{i+1,k}$ of $G^*_{i+1}$ inside the internal faces of $H_{i,j}(b)$ is $n_t + \sum_f s(H^*_{i+1,k},\psi)$ (where the sum is over each connected component $H^*_{i+1,k}$ inside a non-trivial non-empty internal face $f$ of $H_{i,j}(b)$), (2) the number of white vertices in $H_{i,j}(b)$ is $|W^*_{i,j} \cap V(H_{i,j}(b))|$, and (3) the number of black vertices in $H_{i,j}(b)$ is at most $n_t + \sum_f (s(H^*_{i+1,k},\psi)-1)$ (where the sum is over each connected component $H^*_{i+1,k}$ inside a non-trivial non-empty internal face $f$ of $H_{i,j}(b)$). Hence, $s(H^*_{i,j}(b),\psi)\geq n_t - n_t + |W^*_{i,j} \cap V(H_{i,j}(b))| + n_n$. By the assumption of Case A, we have $n_n\geq 1$, and Condition (2a) of Lemma~\ref{th:inductive} follows.

    Consider any non-trivial non-empty internal face $f$ of $H_{i,j}(b)$ containing a connected component $H^*_{i+1,k}$ of $G^*_{i+1}$ in its interior. We consider the ${\cal BU}$-nodes of the contracted block-cutvertex tree ${\cal CBC}(H^*_{i+1,k},\psi)$ of $H^*_{i+1,k}$ one at a time. Denote by $n_{bu}$ the number of ${\cal BU}$-nodes in ${\cal CBC}(H^*_{i+1,k},\psi)$ and by $b_1,b_2,\dots,b_{n_{bu}}$ the ${\cal BU}$-nodes of ${\cal CBC}(H^*_{i+1,k},\psi)$ in any order.

    We prove that, when algorithm cycle-breaker deals with ${\cal BU}$-node $b_l$, for any $1\leq l\leq n_{bu}$, it inserts into $B_f$ a number of vertices which is at most $s(H^*_{i+1,k}(b_l),\psi)-1$. Namely, if $s(H^*_{i+1,k}(b_l),\psi)\geq |W^*_{i,j} \cap V(H_{i+1,k}(b_l))|+1$, then it suffices to observe that, for each white vertex incident to the outer face of $H^*_{i+1,k}(b_l)$, at most one black vertex is inserted into $B_f$; further, if $s(H^*_{i+1,k}(b_l),\psi)= |W^*_{i,j} \cap V(H_{i+1,k}(b_l))|$ and there exists an edge $e$ incident to the outer face of $H^*_{i+1,k}(b_l)$ whose both end-vertices are white, then, for each white vertex incident to the outer face of $H^*_{i+1,k}(b_l)$, at most one black vertex is inserted into $B_f$ with the exception of one of the end-vertices of $e$, for which no black vertex is inserted into $B_f$. Hence, the number of vertices inserted into $B_f$ by algorithm cycle-breaker is at most $\sum_{l=1}^{n_{bu}} (s(H^*_{i+1,k}(b_l),\psi)-1)=\sum_{l=1}^{n_{bu}} s(H^*_{i+1,k}(b_l),\psi)-b_{n_{bu}}$.

    It remains to prove that $\sum_{l=1}^{n_{bu}} s(H^*_{i+1,k}(b_l),\psi)= s(H^*_{i+1,k},\psi)+b_{n_{bu}}-1$, which is done as follows. (Roughly speaking, if $b_{n_{bu}}>1$, then $\sum_{l=1}^{n_{bu}} s(H^*_{i+1,k}(b_l),\psi)>s(H^*_{i+1,k},\psi)$ holds because white cutvertices in $H^*_{i+1,k}$ belong to more than one graph $H^*_{i+1,k}(b_l)$, hence they contribute more than $1$ to $\sum_{l=1}^{n_{bu}} s(H^*_{i+1,k}(b_l),\psi)$, while they contribute exactly $1$ to $s(H^*_{i+1,k},\psi)$). Root ${\cal CBC}(H^*_{i+1,k},\psi)$ at any ${\cal BU}$-node, and orient all its edges towards the root. We now {\em assign} each white cutvertex $c_x$ in $H^*_{i+1,k}$ to the only ${\cal BU}$-node $b_l$ such that edge $(c_x,b_l)$ is oriented from $c_x$ to $b_l$. Such an assignment corresponds to consider $b_l$ as the only ${\cal BU}$-node in which $c_x$ is counted in order to relate $s(H^*_{i+1,k},\psi)$ to $\sum_{l=1}^{n_{bu}} s(H^*_{i+1,k}(b_l),\psi)$. Now the difference $\sum_{l=1}^{n_{bu}} s(H^*_{i+1,k}(b_l),\psi) -  s(H^*_{i+1,k},\psi)$ is equal to the number of pairs $(c_x,b_l)$ such that $c_x$ is not assigned to $b_l$, that is, the number of edges $(c_x,b_l)$ that are oriented from $b_l$ to $c_x$. For each ${\cal BU}$-node $b_l$ of ${\cal CBC}(H^*_{i+1,k},\psi)$, there is one such an edge, except for the root for which there is no such an edge. Hence, we get that $\sum_{l=1}^{n_{bu}} s(H^*_{i+1,k}(b_l),\psi)= s(H^*_{i+1,k},\psi)+b_{n_{bu}}-1$.

\item We now discuss {\bf Case B}. First, the sum of the surpluses $s(H^*_{i+1,k},\psi)$ of the connected components $H^*_{i+1,k}$ of $G^*_{i+1}$ inside the internal faces of $H_{i,j}(b)$ is equal to $n_t$, given that each connected component $H^*_{i+1,k}$ is either a single white vertex, or it is such that all the vertices incident to the outer face of $H^*_{i+1,k}$ are black (hence by induction $s(H^*_{i+1,k}(b),\psi)\geq |W^*_{i+1,k} \cap V(H_{i+1,k}(b))|+1 = 1$).

    Second, we prove that algorithm cycle-breaker defines $B_{f_1}=B_{f_2}=\{v^*_{1,2}\}$ for two trivial faces $f_1$ and $f_2$ of $H_{i,j}(b)$ sharing a vertex $v^*_{1,2}$. Suppose the contrary, for a contradiction. Two trivial faces $f_1$ and $f_2$ of $H_{i,j}(b)$ sharing a vertex $v^*_{1,2}$ exist by the assumption of Case B. Hence, algorithm cycle-breaker does not define $B_{f_1}=B_{f_2}=\{v^*_{1,2}\}$ only if it does define $B_{f_1}=B_{f_3}=\{v^*_{1,3}\}$, for some vertex $v^*_{1,3}$ incident to $f_1$ and to a trivial face $f_3$ of $H_{i,j}$ (possibly after swapping the labels of $f_1$ and $f_2$). If $f_3$ is a face of $H_{i,j}(b)$, we immediately get a contradiction. If $f_3$ is not a face of $H_{i,j}(b)$, then we get a contradiction since any vertex that is incident to an internal face of $H_{i,j}(b)$ and to an internal face of $H_{i,j}$ not in $H_{i,j}(b)$ is white, by definition of contracted block-cutvertex tree, hence it is not in $\bigcup_{f} B_f$.

    Thus, $|B^*_{i,j} \cap V(H_{i,j}(b))|=|\bigcup_{f} B_f|<n_t$. In fact, each trivial face contributes with at most one vertex to $\bigcup_{f} B_f$ and at least two trivial faces of $H_{i,j}(b)$ contribute with a total of one vertex to $\bigcup_{f} B_f$.

    Hence, $s(H^*_{i,j}(b),\psi) \geq n_t + |W^*_{i,j} \cap V(H_{i,j}(b))| - (n_t -1) = |W^*_{i,j} \cap V(H_{i,j}(b))| +1$, thus Condition (2a) of Lemma~\ref{th:inductive} is satisfied.

\item We now discuss {\bf Case C}. As in Case B, the sum of the surpluses of the connected components $H^*_{i+1,k}$ of $G^*_{i+1}$ inside the internal faces of $H_{i,j}(b)$ is equal to $n_t$.

    Second, $|B^*_{i,j} \cap V(H_{i,j}(b))|=|\bigcup_{f} B_f|= n_t$, as each trivial face contributes with exactly one vertex to $\bigcup_{f} B_f$. (Notice that, since no two trivial faces share a vertex, then no two trivial faces contribute with the same vertex to $\bigcup_{f} B_f$.)

    Hence, $s(H^*_{i,j}(b),\psi) = n_t + |W^*_{i,j} \cap V(H_{i,j}(b))| - n_t = |W^*_{i,j} \cap V(H_{i,j}(b))|$. Thus, in order to prove that Condition (2b) of Lemma~\ref{th:inductive} is satisfied, it remains to prove that there exists an edge incident to the outer face of $H^*_{i,j}(b)$ whose end-vertices belong to $W^*_{i,j}$. We will in fact prove that there exists an edge incident to the outer face of $H^*_{i,j}(b)$ whose end-vertices belong to $W^*_{i,j}$ in {\em every} $2$-connected component $D^*_{i,j}(b)$ of $H^*_{i,j}(b)$. Denote by $D_{i,j}(b)$ the outerplane graph induced by the vertices incident to the outer face of $D^*_{i,j}(b)$. Refer to Fig.~\ref{fig:casec}.

    \begin{figure}[tb]
        \begin{center}
        \mbox{\includegraphics[scale=0.6]{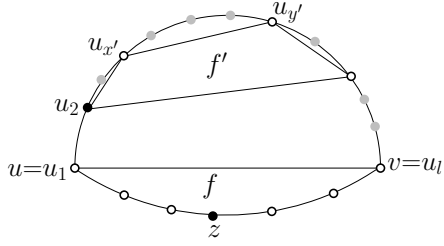}}
        \caption{Illustration for the proof that there exists an edge incident to the outer face of $D^*_{i,j}(b)$ whose end-vertices belong to $W^*_{i,j}$. }
        \label{fig:casec}
        \end{center}
    \end{figure}

    Suppose, for a contradiction, that no edge incident to the outer face of $D^*_{i,j}(b)$ has both its end-vertices in $W^*_{i,j}$. If all the faces of $D_{i,j}(b)$ are empty, then every edge incident to the outer face of $D^*_{i,j}(b)$ has both its end-vertices in $W^*_{i,j}$, thus obtaining a contradiction. Assume next that $D_{i,j}(b)$ has at least one trivial face $f$. Since exactly one of the vertices incident to $f$, say vertex $z$, belongs to $\bigcup_{f} B_f$ (as otherwise two trivial faces of $D_{i,j}(b)$ would exist sharing a vertex), it follows that at least one of the edges delimiting $f$ has both its end-vertices in $W^*_{i,j}$. Denote by $(u,v)$ such an edge and assume, w.l.o.g., that $u$, $v$, and $z$ appear in this clockwise order along the cycle delimiting $f$. If $(u,v)$ is incident to the outer face of $D_{i,j}(b)$, we immediately have a contradiction. Otherwise, $(u,v)$ is an internal edge of $D_{i,j}(b)$. Denote by $u=u_1,u_2,\dots,u_l=v$ the clockwise order of the vertices along the cycle delimiting the outer face of $D_{i,j}(b)$ from vertex $u$ to vertex $v$. We assume w.l.o.g. that $(u,v)$ is {\em maximal}, that is, there is no edge $(u_x,u_y)\neq (u,v)$ such that (1) $1\leq x < y \leq l$, (2) $u_x$ and $u_y$ are both white, and (3) there exists a trivial face $f'$ of $D_{i,j}(b)$ that is incident to edge $(u_x,u_y)$ and that is internal to cycle $C_{1,2}=(u_1,u_2,\dots,u_x,u_y,u_{y+1},\dots,u_l)$. Then, consider vertex $u_2$. If it is white, then we have a contradiction, as edge $(u_1,u_2)$ is incident to the outer face of $D_{i,j}(b)$. Otherwise, $u_2$ is black. Then, there exists a trivial face $f'$ such that $B_{f'}=\{u_2\}$. Since no vertex incident to $f'$ and different from $u_2$ belongs to $\bigcup_{f} B_f$ (as otherwise two trivial faces of $D_{i,j}(b)$ would exist sharing a vertex), it follows that at least one of the edges delimiting $f'$, say $e'=(u_{x'},u_{y'})$, has both its end-vertices in $W^*_{i,j}$. By planarity, the end-vertices of $e'$ are among $u_1,u_2,\dots,u_l$. Further, they are different from $u_1$ and $u_l$, as otherwise $f$ and $f'$ would share a vertex. Hence, $f'$ is internal to cycle $(u_1,u_2,\dots,u_{x'},u_{y'},u_{y'+1},\dots,u_l)$, thus contradicting the maximality of $(u,v)$.
\end{itemize}

This concludes the proof of the lemma.
\end{proof}

%%%%%%%%%%%%%%%%%%%%%%%%%%%%
%%%%%%%%%%%%%%%%%%%%%%%%%%%%
%%%%%%%%%%%%%%%%%%%%%%%%%%%%
%%%%%%%%%%%%%%%%%%%%%%%%%%%%
%%%%%%%%%%%%%%%%%%%%%%%%%%%%

\section{Proof of Theorem~\ref{th:plane3trees-main}} \label{se:3trees}

In this section we prove Theorem~\ref{th:plane3trees-main}. It suffices to
prove Theorem~\ref{th:plane3trees-main} for an $n$-vertex {\em maximal} plane
graph $G_1$ and an $n$-vertex ({\em maximal}) plane $3$-tree $G_2$. In fact, if $G_1$ and $G_2$ are not maximal, then they can be augmented to an $n$-vertex
maximal plane graph $G'_1$ and an $n$-vertex plane $3$-tree $G'_2$, respectively; the latter augmentation can be always performed, as proved in~\cite{kv-nppt-12}. Then, a \sefenomap can be constructed for $G'_1$ and $G'_2$, and finally
the edges not in $G_1$ and $G_2$ can be removed, thus obtaining a \sefenomap of $G_1$
and $G_2$. In the following we assume that $G_1$ and $G_2$ are an $n$-vertex
maximal plane graph and an $n$-vertex plane $3$-tree, respectively, for
some $n\geq 3$. Denote by $C_i=(u_i,v_i,z_i)$ the cycle delimiting the outer
face of $G_i$, for $i=1,2$, where vertices $u_i$, $v_i$, and $z_i$ appear in
this clockwise order along $C_i$.

Let $p_u$, $p_v$, and $p_z$ be three points in the plane. Let $s_{uv}$,
$s_{vz}$, and $s_{zu}$ be three curves connecting $p_u$ and $p_v$,
connecting $p_v$ and $p_z$, and connecting $p_z$ and $p_u$, respectively, that
do not intersect except at their common end-points. Let $\Delta_{uvz}$ be
the closed curve $s_{uv}\cup s_{vz}\cup s_{zu}$. Assume that $p_u$, $p_v$, and
$p_z$ appear in this clockwise order along $\Delta_{uvz}$. Denoting by
$\int(\Delta)$ the interior of a closed curve $\Delta$, let
$\cl(\Delta)=\int(\Delta) \cup \Delta$. Let $P$ be a set of $n-3 \geq 0$ points
in $\int(\Delta_{uvz})$ and let $R$ be a set of points on $\Delta_{uvz}$, where
$p_u,p_v,p_z\in R$. Let $S$ be a set of curves whose end-points are in
$R\cup P$ such that: (i) No two curves in $S$ intersect, except possibly
at common end-points, (ii) no two curves in $S$ connect the same pair of
points in $R\cup P$, (iii) each curve in $S$ is contained in
$\cl(\Delta_{uvz})$, (iv) any point in $R$, except possibly for $p_u$, $p_v$,
and $p_z$, has exactly one incident curve in $S$, and (v) no curve in
$S$ connects two points of $R$ both lying on $s_{uv}$, or both lying on
$s_{vz}$, or both lying on $s_{zu}$. See Fig.~\ref{fig:plane3treesdrawing}(a).
We show the following.

\begin{figure}[tb]
\begin{center}
\begin{tabular}{c c}
\mbox{\includegraphics[scale=0.4]{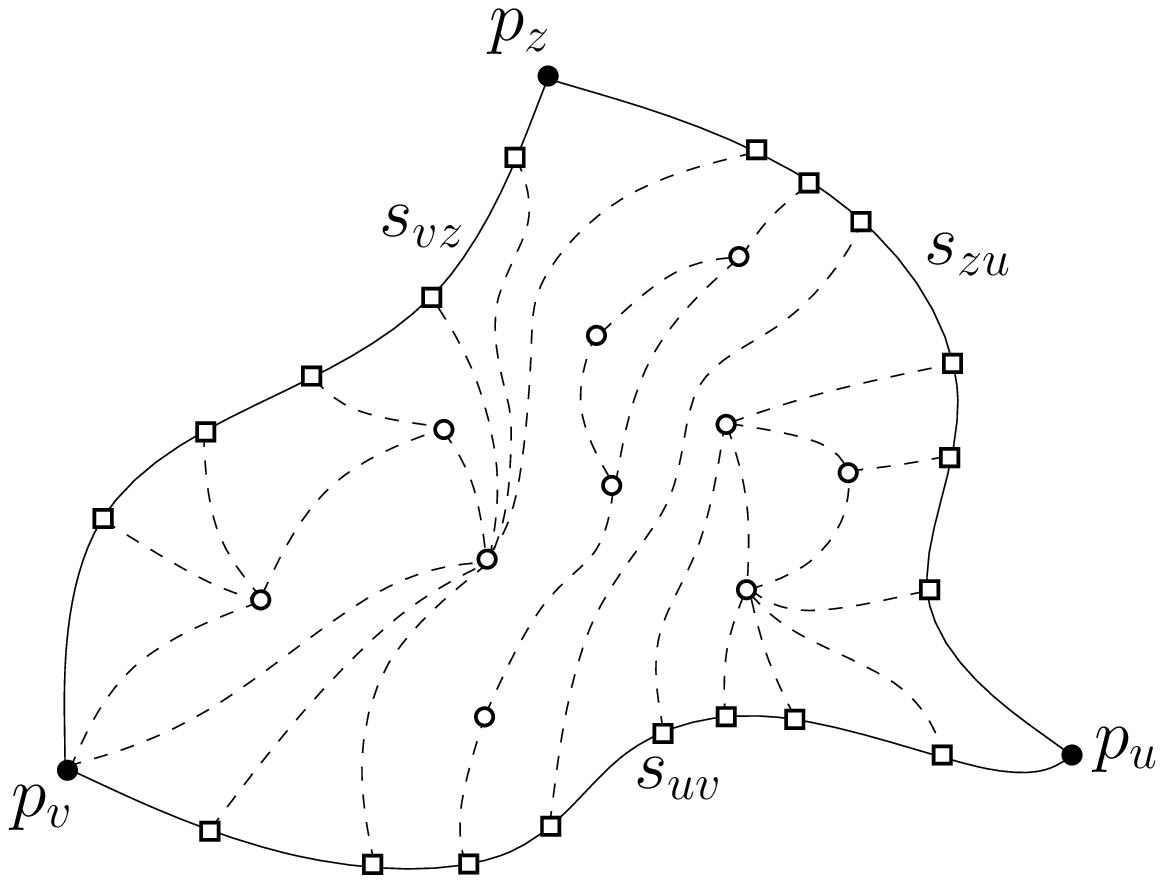}} \hspace{1mm} &
\mbox{\includegraphics[scale=0.4]{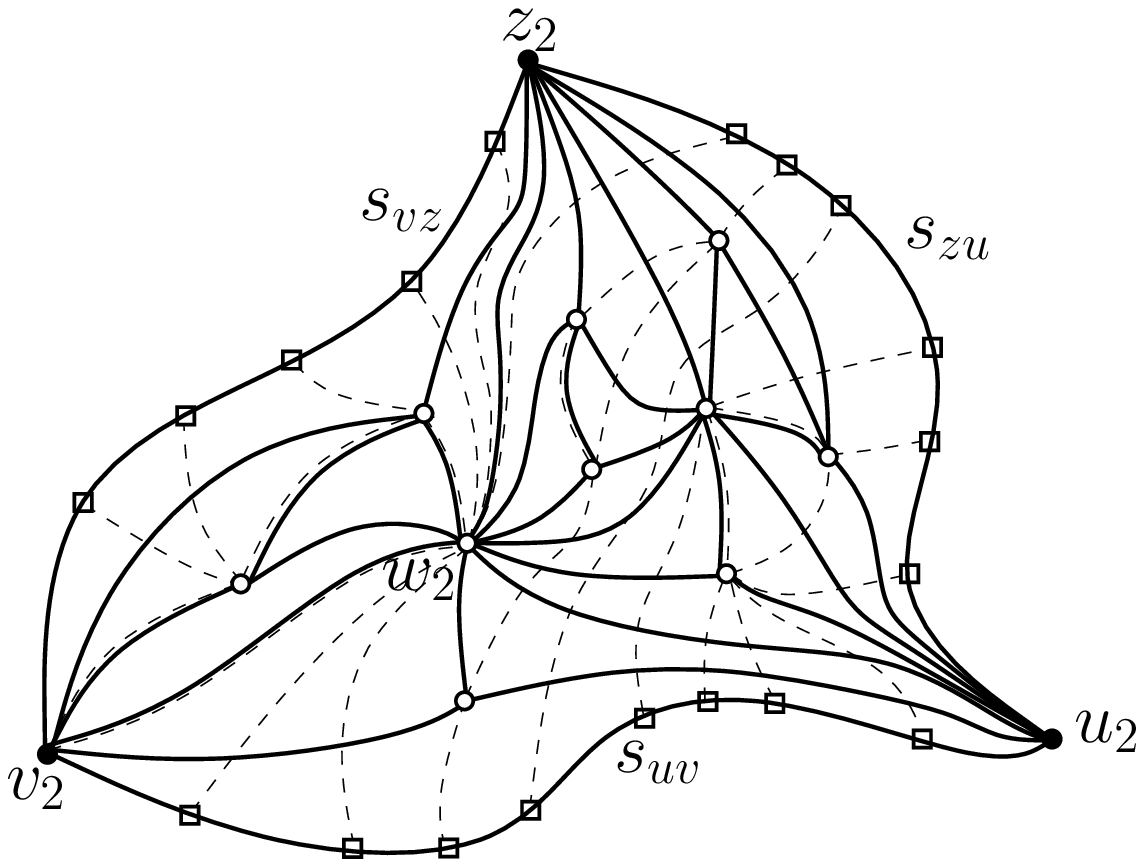}}\\
(a) \hspace{2mm}& (b)
\end{tabular}
\caption{(a) Setting for Lemma~\ref{le:splittingpoints}. White circles are points in $P$. White squares are points in $R$. Dashed curves are in $S$. Curves $s_{uv}$, $s_{vz}$, and $s_{zu}$ are solid thin curves. (b) A planar drawing of $G_2$ (solid thick lines) satisfying the properties of Lemma~\ref{le:splittingpoints}.}
\label{fig:plane3treesdrawing}
\end{center}
\end{figure}

\begin{lemma} \label{le:splittingpoints}
There exists a planar drawing $\Gamma_2$ of $G_2$ such that:
\begin{itemize}
\item[(a)] Vertices $u_2$, $v_2$, and $z_2$ are mapped to $p_u$, $p_v$, and $p_z$, respectively;
\item[(b)] edges $(u_2,v_2)$, $(v_2,z_2)$, and $(z_2,u_2)$ are represented by curves $s_{uv}$, $s_{vz}$, and $s_{zu}$, respectively;
\item[(c)] the internal vertices of $G_2$ are mapped to the points of $P$;
\item[(d)] each edge of $G_2$ that connects two points $p_1,p_2\in P \cup\{p_u, p_v, p_z\}$ such that there exists a curve $s\in S$ connecting $p_1$ and $p_2$ is represented by $s$ in $\Gamma_2$; and
\item[(e)] each edge $e$ of $G_2$ and each curve $s \in S$ such that $e$ is not represented by $s$ in $\Gamma_2$ cross at most once.
\end{itemize}
\end{lemma}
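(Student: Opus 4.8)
The plan is to prove the lemma by induction on $n=|V(G_2)|$. In the base case $n=3$, $G_2$ is the cycle $(u_2,v_2,z_2)$ and $P=\emptyset$: I map $u_2,v_2,z_2$ to $p_u,p_v,p_z$ and draw the three edges of $G_2$ along $s_{uv},s_{vz},s_{zu}$. Conditions (a)--(c) are immediate; (d) is vacuous because property~(v) of $S$ forbids any curve of $S$ to join two of $p_u,p_v,p_z$, each such pair lying on a common side of $\Delta_{uvz}$; and (e) holds because every curve of $S$ lies in $\cl(\Delta_{uvz})$ and hence cannot cross any of $s_{uv},s_{vz},s_{zu}$.

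For the inductive step assume $n\geq 4$. Since $G_2$ is a maximal plane $3$-tree, it has a unique internal vertex $w$ adjacent to all of $u_2,v_2,z_2$; the edges $(w,u_2),(w,v_2),(w,z_2)$ split the interior of $(u_2,v_2,z_2)$ into three triangular regions that induce plane $3$-trees $G_2^{uv},G_2^{vz},G_2^{zu}$ with outer triangles $u_2v_2w$, $v_2z_2w$, $z_2u_2w$, each having fewer than $n$ vertices; if $n_{uv},n_{vz},n_{zu}$ denote their numbers of internal vertices then $n_{uv}+n_{vz}+n_{zu}=|P|-1$. The crux of the proof is the following claim: \emph{there exist a point $p_w\in P$ and three curves $\sigma_u,\sigma_v,\sigma_z$ from $p_w$ to $p_u,p_v,p_z$ respectively, pairwise disjoint except at $p_w$, contained in $\cl(\Delta_{uvz})$, with $\sigma_u$ (resp.\ $\sigma_v$, $\sigma_z$) meeting $\Delta_{uvz}$ only at $p_u$ (resp.\ $p_v$, $p_z$), such that} (i) \emph{whenever $S$ contains a curve joining $p_w$ to $p_u$ (resp.\ $p_v$, $p_z$), that curve is $\sigma_u$ (resp.\ $\sigma_v$, $\sigma_z$);} (ii) \emph{the three regions $\Delta^{uv},\Delta^{vz},\Delta^{zu}$ into which $\sigma_u\cup\sigma_v\cup\sigma_z$ partitions $\cl(\Delta_{uvz})$ contain in their interiors exactly $n_{uv},n_{vz},n_{zu}$ points of $P\setminus\{p_w\}$ respectively, and no point of $P\setminus\{p_w\}$ lies on $\sigma_u\cup\sigma_v\cup\sigma_z$;} and (iii) \emph{every curve of $S$ meets the interior of each of $\Delta^{uv},\Delta^{vz},\Delta^{zu}$ in at most one arc.} To establish the claim I would first choose $p_w$ inside the ``central'' region cut out by the curves of $S$ joining two points of $\partial\Delta_{uvz}$: by property~(v) each such curve separates exactly one of $p_u,p_v,p_z$ from the other two, and since these curves are non-crossing they form a laminar family with a well-defined innermost region; I would then route $\sigma_u,\sigma_v,\sigma_z$ through the faces of the arrangement $\Delta_{uvz}\cup S$, keeping each curve alongside curves of $S$ so that (iii) holds, while using a continuity argument (sliding the curves across $P\setminus\{p_w\}$) to realize the prescribed counts. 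When some $\sigma_\bullet$ is forced by~(i) to coincide with a curve of $S$, the choice of $p_w$ must also be made compatible with the point counts that the forced curves induce; this is the most delicate point.

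Granting the claim, I would set up three recursive instances. For $\Delta^{uv}$ the ambient triangle is $\Delta^{uv}$ with corners $p_u,p_v,p_w$ in the appropriate clockwise order (matching the outer face of $G_2^{uv}$) and sides $s_{uv},\sigma_v,\sigma_u$; the new point set is $P\cap\int(\Delta^{uv})$, of size $n_{uv}$ by~(ii); and the new curve family is obtained by intersecting each curve of $S$ with $\cl(\Delta^{uv})$ and promoting its crossing points with $\sigma_u,\sigma_v$ to boundary points. By~(iii) each curve of $S$ contributes at most one curve to the new family, so the family is non-crossing and no two of its curves join the same pair of points; and the new instance satisfies property~(v) because the new boundary points on $\sigma_u,\sigma_v$ each have a single incident new curve, $\sigma_u,\sigma_v$ avoid $P\setminus\{p_w\}$ and touch $\partial\Delta_{uvz}$ only at corners, and a new curve joining two points of $s_{uv}$ would already violate~(v) for $S$. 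Applying the inductive hypothesis yields drawings $\Gamma_2^{uv},\Gamma_2^{vz},\Gamma_2^{zu}$ that agree on $p_u,p_v,p_z,p_w$ and on $\sigma_u,\sigma_v,\sigma_z$, and whose union is the drawing $\Gamma_2$ of $G_2$.

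It remains to check (a)--(e) for $\Gamma_2$. Parts (a)--(c) are immediate, the bijection in~(c) being forced by $|P|=n-3$. For~(d): the edges $(w,u_2),(w,v_2),(w,z_2)$ are handled by~(i), the outer edges of $G_2$ vacuously as in the base case, and any other edge lies in a unique $G_2^\bullet$ whose region contains its endpoints, so the unique curve of $S$ joining them (if any) lies entirely in that region by~(iii), belongs to the corresponding new family, and is used by the inductive hypothesis. For~(e): an edge drawn along $s_{uv},s_{vz}$ or $s_{zu}$ crosses no curve of $S$; an edge drawn along some $\sigma_\bullet$ either coincides with the curve of $S$ in question or, by~(iii), meets it at most once; and for an edge $e$ of some $G_2^\bullet$ and a curve $s\in S$ not carrying $e$, the part of $s$ inside the relevant region is a single arc of the new family by~(iii), $e$ is drawn inside that region, and the inductive hypothesis bounds their crossings by one. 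As indicated, the main obstacle is the crux claim, and within it the simultaneous reconciliation of the prescribed partition of $P$, the coincidences forced by~(i), and the ``at most one arc per region'' property~(iii).
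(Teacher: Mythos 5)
Your top-level architecture coincides with the paper's: induction on $n$, isolating the internal vertex $w$ adjacent to the outer triangle, a claim producing a point $p_w\in P$ and three connector curves that split $\cl(\Delta_{uvz})$ into three sub-instances with prescribed point counts $n_{uv},n_{vz},n_{zu}$ and with each curve of $S$ meeting each sub-region in at most one arc, followed by three recursive applications of the lemma. Your derivation of (a)--(e) from the claim, and your check that the three sub-instances again satisfy conditions (i)--(v), also match the paper. The problem is that the crux claim is exactly where the paper spends the bulk of its effort, and your treatment of it is a gap rather than a proof. The paper proves the claim by a second induction on $n_{uv}+n_{vz}+n_{zu}$ after almost-triangulating the arrangement $\Delta_{uvz}\cup S$: in the inductive step it identifies a point $p$ ``close to'' $s_{uv}$ (sharing a face of the arrangement with $s_{uv}$, with no point of $P$ inside the bounded faces of its fan towards $s_{uv}$), deletes $p$ and its incident curves, repairs the resulting face with dummy boundary points and dummy edges so the smaller instance remains legal, recurses with count $n_{uv}-1$, and then reinserts $p$ next to $s_{uv}$ inside $\Delta_{uvw}$, rerouting its incident edges and locally deforming the three connector curves inside the affected face while preserving all crossing counts. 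Your proposal replaces all of this with ``choose $p_w$ in the innermost region'' plus ``a continuity argument (sliding the curves across $P\setminus\{p_w\}$)''.

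That sliding argument does not work as stated. The invariant you must preserve---each curve of $S$ meets each of $\Delta^{uv},\Delta^{vz},\Delta^{zu}$ in at most one arc, equivalently crosses each $\sigma_\bullet$ at most once---is a discrete topological constraint, not a continuous one. When $\sigma_u$ is pushed across a point $q\in P$ to transfer $q$ from one sub-region to another, it must afterwards cross every curve of $S$ incident to $q$ whose other endpoint stays behind; nothing in a continuity argument prevents this from producing a second crossing with a curve that $\sigma_u$ already crossed once, nor from being obstructed by the arrangement altogether. Moreover, when $S$ already contains a curve from the chosen $p_w$ to a corner, your condition (i) forces the corresponding $\sigma_\bullet$ to coincide with it, so there is no sliding freedom on that side and the induced point counts are rigid; you flag this (``the choice of $p_w$ must also be made compatible \dots this is the most delicate point'') but give no argument that a compatible $p_w$ exists. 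Establishing that simultaneous compatibility of the prescribed partition of $P$, the forced coincidences, and the one-crossing property is precisely the content of the claim, so as written the induction does not close.
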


\begin{proof}
We prove the statement by induction on $n$. If $n=3$, then construct $\Gamma_2$ by mapping vertices $u_2$, $v_2$, and $z_2$ to $p_u$, $p_v$, and $p_z$ (thus satisfying Property (a)), respectively, and by mapping  edges $(u_2,v_2)$, $(v_2,z_2)$, and $(z_2,u_2)$ to curves $s_{uv}$, $s_{vz}$, and $s_{zu}$ (thus satisfying Property (b)). Property (c) is trivially satisfied since $G_2$ has no internal vertices and hence $P=\emptyset$. Property (d) is trivially satisfied since $G_2$ has no internal edges and hence $S=\emptyset$. Finally, Property (e) is trivially satisfied since $S=\emptyset$.

Suppose next that $n>3$. By the properties of plane $3$-trees, $G_2$ has an internal vertex $w_2$ that is connected to all of $u_2$, $v_2$, and $z_2$. Also, the subgraphs $G^{uv}_2$, $G^{vz}_2$, and $G^{zu}_2$ of $G_2$ induced by the vertices inside or on the border of cycles $C_2^{uv}=(u_2,v_2,w_2)$, $C_2^{vz}=(v_2,z_2,w_2)$, and $C_2^{zu}=(z_2,u_2,w_2)$, respectively, are plane $3$-trees with $n_{uv}$, $n_{vz}$, and $n_{zu}$ internal vertices, respectively, where $n_{uv}+n_{vz}+n_{zu}=n-4$.

We claim that there exists a point $p_w \in P$ and three curves $s_{uw}$, $s_{vw}$, and $s_{zw}$  connecting $p_u$ and $p_w$, connecting $p_v$ and $p_w$, and connecting $p_z$ and $p_w$, respectively, such that the following hold (see Fig.~\ref{fig:plane3treesclaim}):

\begin{figure}[htb]
\begin{center}
\mbox{\includegraphics[scale=0.48]{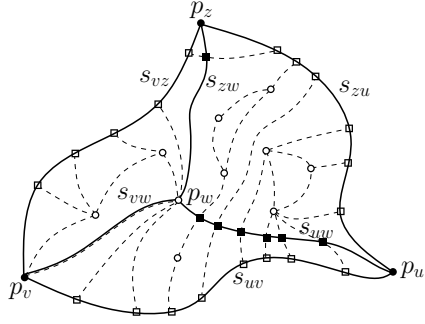}}
\caption{Illustration for the claim to prove Lemma~\ref{le:splittingpoints}. White circles represent points in $P$. White squares represent points in $R$. Points in $R_{uv}$, in $R_{vz}$, and in $R_{zu}$ are black squares. Dashed curves are in $S$. Curves $s_{uv}$, $s_{vz}$, $s_{zu}$, $s_{uw}$, $s_{vw}$, and $s_{zw}$ are solid thick curves. In this example $n_{uv}=1$, $n_{vz}=2$, and $n_{zu}=6$. }
\label{fig:plane3treesclaim}
\end{center}
\end{figure}

\begin{itemize}
\item[(P1)] $s_{uw}$, $s_{vw}$, and $s_{zw}$ do not intersect each other and do not intersect $s_{uv}$, $s_{vz}$, and $s_{zu}$, other than at common end-points;
\item[(P2)] if there exists a curve $s_a\in S$ connecting $p_u$ and $p_w$, then $s_{uw}$ coincides with $s_a$; if there exists a curve $s_b\in S$ connecting $p_v$ and $p_w$, then $s_{vw}$ coincides with $s_b$; if there exists a curve $s_c\in S$ connecting $p_z$ and $p_w$, then $s_{zw}$ coincides with $s_c$;
\item[(P3)] for any curve $s\in S$ that does not coincide with $s_{uw}$, curves $s$ and $s_{uw}$ cross at most once; for any curve $s\in S$ that does not coincide with $s_{vw}$, curves $s$ and $s_{vw}$ cross at most once; for any curve $s\in S$ that does not coincide with $s_{zw}$, curves $s$ and $s_{zw}$ cross at most once;
\item[(P4)] the closed curve $\Delta_{uvw}=s_{uv}\cup s_{uw}\cup s_{vw}$ contains in its interior a subset $P_{uv}$ of $P$ with $n_{uv}$ points; the closed curve $\Delta_{vzw}=s_{vz}\cup s_{vw}\cup s_{zw}$ contains in its interior a subset $P_{vz}$ of $P$ with $n_{vz}$ points; and the closed curve $\Delta_{zuw}=s_{zu}\cup s_{zw}\cup s_{uw}$ contains in its interior a subset $P_{zu}$ of $P$ with $n_{zu}$ points; and
\item[(P5)] if a curve $s\in S$ has both its end-points in $\cl(\Delta_{uvw})$, in $\cl(\Delta_{vzw})$, or in $\cl(\Delta_{zuw})$, then it is entirely contained in $\cl(\Delta_{uvw})$, in $\cl(\Delta_{vzw})$, or in $\cl(\Delta_{zuw})$, respectively.
\end{itemize}

We first prove that the claim implies the lemma, and we later prove the claim.

Suppose that the claim holds. Denote by $R_{uv}$ the set of points consisting of: (i) the points in $R$ lying on $s_{uv}$, (ii) the intersection points of $s_{uw}$ with the curves in $S$, if $s_{uw}$ does not coincide with any edge in $S$, and (iii) the intersection points of $s_{vw}$ with the curves in $S$, if $s_{vw}$ does not coincide with any edge in $S$. Analogously define $R_{vz}$ and $R_{zu}$. Let $S^+$ be the set of curves obtained by subdividing each curve $s\in S$ with its intersection points with $s_{uw}$, $s_{vw}$, and $s_{zw}$ (e.g., if $s$ connects points $q_1,q_2 \in R\cup P$ and it has an intersection $q_3$ with $s_{uw}$, then $S^+$ contains two curves connecting $q_1$ and $q_3$, and connecting $q_3$ and $q_2$, respectively). Observe that no curve in $s\in S^+$ has both its endpoints lying on $s_{uw}$, or both its endpoints lying on $s_{vw}$, or both its endpoints lying on $s_{zw}$, as otherwise the curve in $S$ of which $s$ is part would cross twice $s_{uw}$ or $s_{vw}$ or $s_{zw}$, respectively, contradicting Property P3 of the claim. Also, denote by $S_{uv}$, $S_{vz}$, and $S_{zu}$ the subsets of $S^+$ composed of the curves in $\cl(\Delta_{uvw})$, in $\cl(\Delta_{vzw})$, and in $\cl(\Delta_{zuw})$, respectively.

Apply induction three times. The first time to construct a drawing $\Gamma^{uv}_2$ of $G^{uv}_2$ (where the parameters $p_u$, $p_v$, $p_z$, $s_{uv}$, $s_{vz}$, $s_{zu}$, $\Delta_{uvz}$, $P$, $R$, and $S$ in the statement of Lemma~\ref{le:splittingpoints} are replaced with $p_u$, $p_v$, $p_w$, $s_{uv}$, $s_{uw}$, $s_{vw}$, $\Delta_{uvw}$, $P_{uv}$, $R_{uv}$, and $S_{uv}$, respectively), the second time to construct a drawing $\Gamma^{vz}_2$ of $G^{vz}_2$ (where the parameters $p_u$, $p_v$, $p_z$, $s_{uv}$, $s_{vz}$, $s_{zu}$, $\Delta_{uvz}$, $P$, $R$, and $S$ in the statement of Lemma~\ref{le:splittingpoints} are replaced with $p_v$, $p_z$, $p_w$, $s_{vz}$, $s_{vw}$, $s_{zw}$, $\Delta_{vzw}$, $P_{vz}$, $R_{vz}$, and $S_{vz}$, respectively), and the third time to construct a drawing $\Gamma^{zu}_2$ of $G^{zu}_2$ (where the parameters $p_u$, $p_v$, $p_z$, $s_{uv}$, $s_{vz}$, $s_{zu}$, $\Delta_{uvz}$, $P$, $R$, and $S$ in the statement of Lemma~\ref{le:splittingpoints} are replaced with $p_z$, $p_u$, $p_w$, $s_{zu}$, $s_{zw}$, $s_{uw}$, $\Delta_{zuw}$, $P_{zu}$, $R_{zu}$, and $S_{zu}$, respectively). Observe that induction can be applied since, by Properties P4 of the claim, the number of points in $P_{uv}$,  in $P_{uv}$, and in $P_{uv}$ is equal to the number of internal vertices of $G^{uv}_2$, of $G^{vz}_2$, and of $G^{zu}_2$, respectively.

Placing $\Gamma^{uv}_2$, $\Gamma^{vz}_2$, and $\Gamma^{zu}_2$ together results in a drawing $\Gamma_2$ of $G_2$; in particular, edge $(u,w)$ is represented by curve $s_{uw}$ both in $\Gamma^{uv}_2$ and in $\Gamma^{vz}_2$ (analogous statements hold for $(v,w)$ and $(z,w)$). Drawing $\Gamma_2$ is planar by induction and by Property P1 of the claim; also, $\Gamma_2$ satisfies Properties (a) and (b) of the lemma by induction; further, $\Gamma_2$ satisfies Property (c) of the lemma by induction and by definition of $p_w$; moreover, $\Gamma_2$ satisfies Property (d) of the lemma by induction and by Properties P2 and P5 of the claim; finally, $\Gamma_2$ satisfies Property (e) of the lemma by induction and by Properties P3 and P5 of the claim.

We now prove the claim. First, we {\em almost-triangulate} the interior of $\Delta_{uvw}$, that is, we add a maximal set of curves $S'$ to $S$ such that:  (i) No two curves in $S \cup S'$ intersect, except possibly at common end-points, (ii) no two curves in $S \cup S'$ connect the same pair of points in $R\cup P$, (iii) each curve in $S \cup S'$ is contained in $\cl(\Delta_{uvz})$, (iv) any point in $R$, except possibly for $p_u$, $p_v$, and $p_z$, has exactly one incident curve in $S \cup S'$, and (v) no curve in $S \cup S'$ connects two points of $R$ both lying on $s_{uv}$, or both lying on $s_{vz}$, or both lying on $s_{zu}$.

Second, we prove the claim by induction on $n_{uv}+n_{vz}+n_{zu}$. In the base case, $n_{uv}+n_{vz}+n_{zu}=0$. Then, let $p_w$ be the only point in $P$. Refer to Fig.~\ref{fig:plane3treesbasecase}.

\begin{figure}[htb]
\begin{center}
\begin{tabular}{c c c}
\mbox{\includegraphics[scale=0.4]{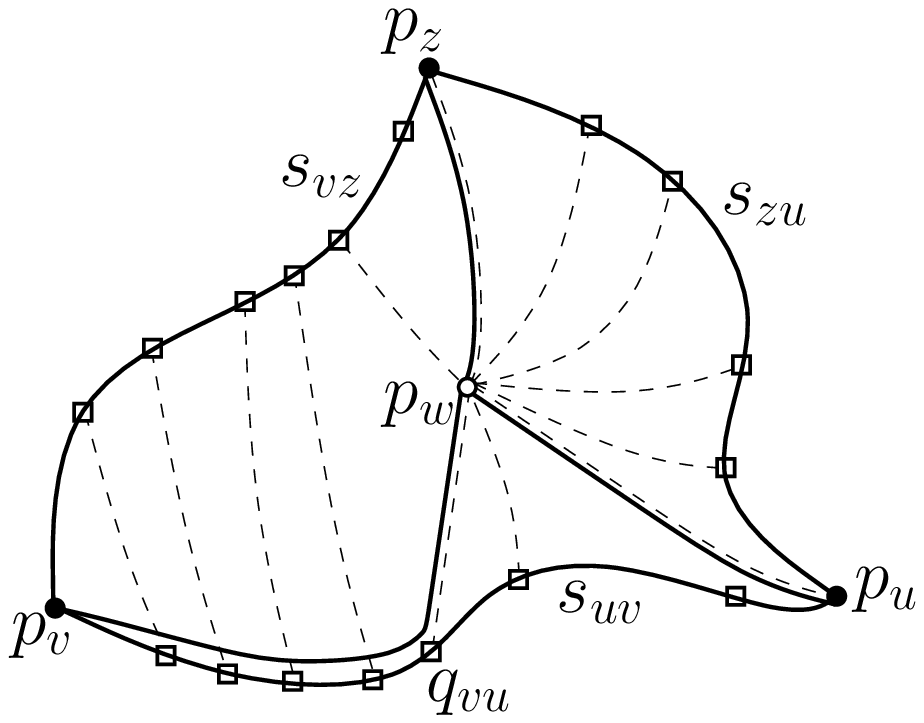}} \hspace{1mm} &
\mbox{\includegraphics[scale=0.4]{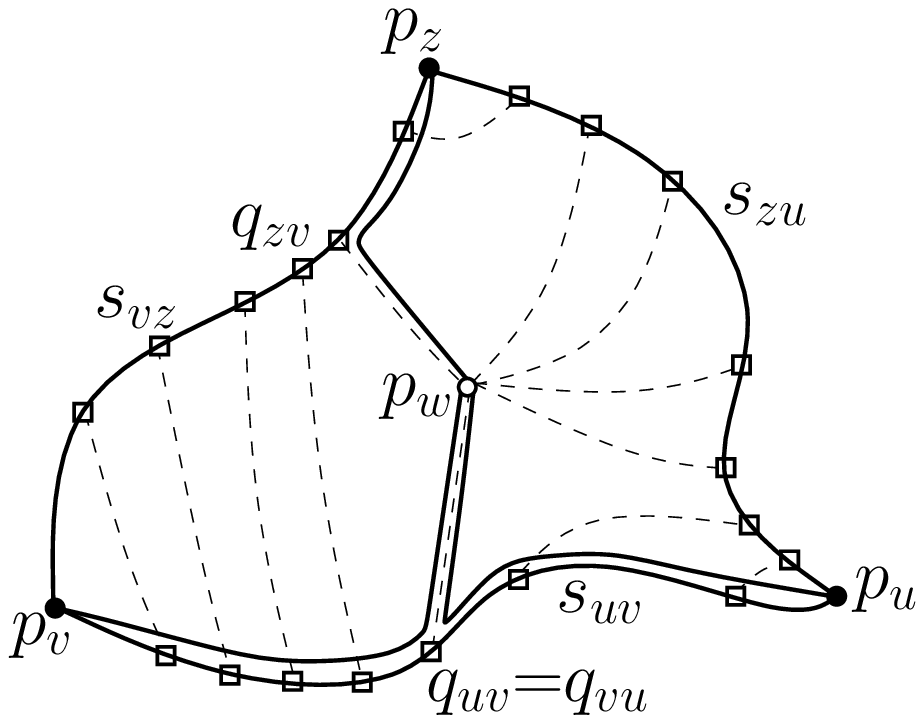}} \hspace{1mm} &
\mbox{\includegraphics[scale=0.4]{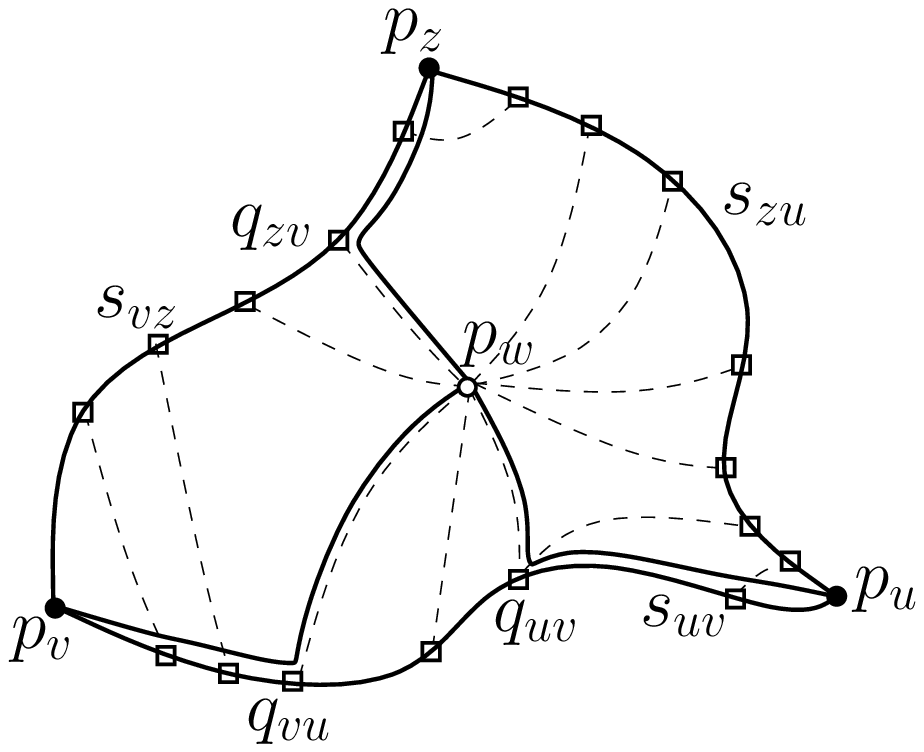}}
\end{tabular}
\caption{Three examples for the base case $n_{uv}+n_{vz}+n_{zu}=0$.}
\label{fig:plane3treesbasecase}
\end{center}
\end{figure}

We show how to draw $s_{uw}$.

\begin{itemize}
\item If a curve $s_a\in S$ exists connecting $p_u$ and $p_w$, then $s_{uw}$ coincides with $s_a$.

\item Suppose that no curve exists in $S$ connecting $p_u$ and $p_w$. If a curve $s_a\in S$ exists connecting $p_w$ and a point in $R$ on $s_{uv}$, then let $q_{uv}$ be the point in $R$ on $s_{uv}$ that is ``closest'' to $u$, i.e., no curve exists connecting $p_w$ and a point $q'_{uv}$ in $R$ such that $q'_{uv}$ lies on the part $s_{quv}$ of $s_{uv}$ between $q_{uv}$ and $u$. Draw $s_{uw}$ as a curve arbitrarily close to the curve $s_a \cup s_{quv}$.

\item Suppose that no curve exists in $S$ connecting $p_w$ and a point in $R$ on $s_{uv}$. If a curve $s_b\in S$ exists connecting $p_w$ and a point in $R$ on $s_{zu}$, then define $s_{quz}$ analogously to $s_{quv}$, and draw $s_{uw}$ as a curve arbitrarily close to the curve $s_b \cup s_{quz}$.

\item Finally, suppose that no curve exists in $S$ connecting $p_w$ with any point in $R$ on $s_{uv}$ or on $s_{zu}$. We claim that $p_w$ and $s_{uv}$ or $p_w$ and $s_{zu}$ are incident to a common face in $\cl(\Delta_{uvz})$. In fact, if $p_w$ and $s_{uv}$ are not incident to a common face in $\cl(\Delta_{uvz})$, then a curve $s_x$ exists in $S$ connecting a point in $R$ on $s_{vz}$ and a point in $R$ on $s_{zu}$ in such a way that $s_x$ ``separates'' $p_w$ from $s_{uv}$; analogously, if $p_w$ and $s_{zu}$ are not incident to a common face in $\cl(\Delta_{uvz})$, then a curve $s_y$ exists in $S$ connecting a point in $R$ on $s_{uv}$ and a point in $R$ on $s_{vz}$ in such a way that $s_y$ ``separates'' $p_w$ from $s_{zu}$; however this implies that $s_x$ and $s_y$ cross, contradicting the assumptions on $S$. Then, say that $p_w$ and $s_{uv}$  are incident to a common face $f$ in $\cl(\Delta_{uvz})$. Insert a dummy point $q_{uv}$ in $R$ on $s_{uv}$ incident to $f$ together with a curve $s_c$ connecting $p_w$ and $q_{uv}$ inside $f$. Define $s_{quv}$ as the part of $s_{uv}$ between $q_{uv}$ and $p_u$. Draw $s_{uw}$ as a curve arbitrarily close to the curve $s_c \cup s_{quv}$.
\end{itemize}

We draw $s_{vw}$ and $s_{zw}$ analogously to $s_{uw}$.

The constructed drawing is easily shown to satisfy Properties P1--P5 of the claim. In particular, Property P3 can be shown to be satisfied by $s_{uw}$ as follows (the proof for $s_{vw}$ and $s_{zw}$ is analogous). Consider any curve $s\in S$. If $s_{uw}$ coincides with a curve $s'\in S$, then $s_{uw}$ and $s$ do not cross, given that $s'$ and $s$ do not cross by the assumptions on $S$. Otherwise, $s_{uw}$ is arbitrarily close to $s_a \cup s_{quv}$, for some curve $s_a\in S$ incident to a point in $R$ on $s_{uv}$ or for some curve $s_a$ lying in the interior of a face $f$ of $\cl(\Delta_{uvz})$. Hence, if $s$ is incident to a point in $R$ on $s_{quv}$ different from $q_{uv}$, then $s_{uw}$ and $s$ cross exactly once arbitrarily close to $s_{quv}$. Otherwise, either $s_{uw}$ and $s$ share $p_w$ as common endpoint and do not cross at any other point, or $s_{uw}$ and $s$ do not share $p_w$ as common endpoint and do not cross at all, given that $s_a$ and $s$, as well as $s_{quv}$ and $s$, do not cross by the assumptions on $S$.

In the inductive case, $n_{uv}+n_{vz}+n_{zu}>0$. Suppose, w.l.o.g., that $n_{uv}>0$, the other cases being analogous.

We say that a point $p\in P$ is {\em close to $s_{uv}$} if the following condition holds. Let $G(P\cup R,S)$ be the plane graph whose vertices are the points in $P\cup R$ and whose edges are the curves in $S$. See Fig.~\ref{fig:closetouv}(a). Also, let $G(u,v,p)$ be the plane subgraph of $G(P\cup R,S)$ whose vertices are the points in $P\cup R$ and whose edges are: (i) The curves that compose $s_{uv}$, and (ii) every curve in $S$ that is incident to $p$ and to a point on $s_{u,v}$ (including $p_u$ and $p_v$). Then $p$ is close to $s_{uv}$ if: (a) It is incident to the same face of $G(P\cup R,S)$ a point on $s_{uv}$ is incident to, and (b) no point in $P$ lies inside any of the bounded faces of $G(u,v,p)$.

\begin{figure}[htb]
\begin{center}
\begin{tabular}{c c c}
\mbox{\includegraphics[scale=0.41]{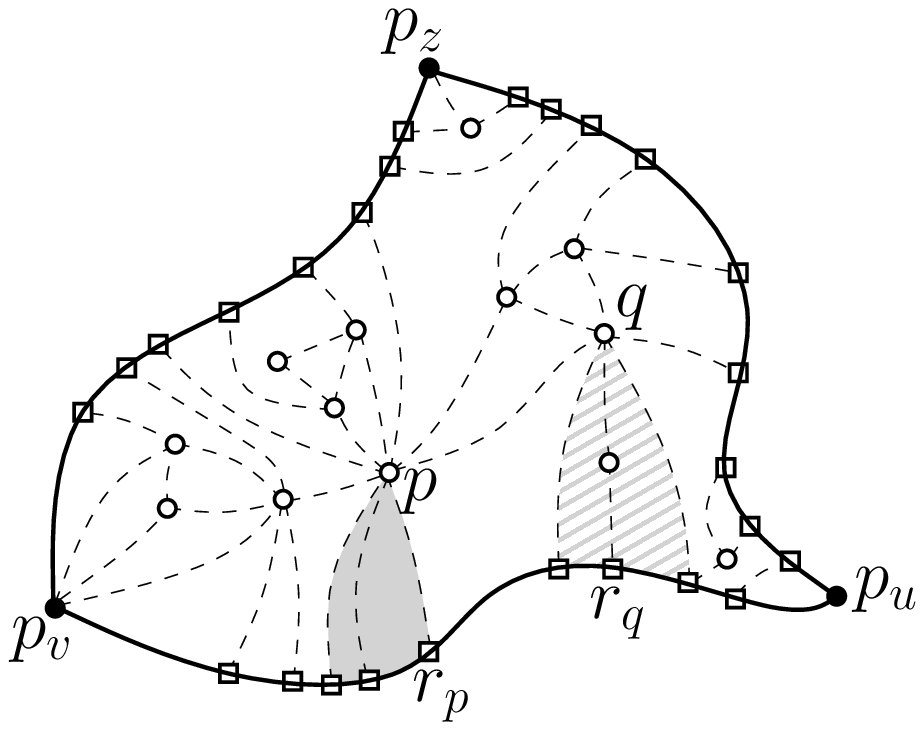}} \hspace{1mm} &
\mbox{\includegraphics[scale=0.41]{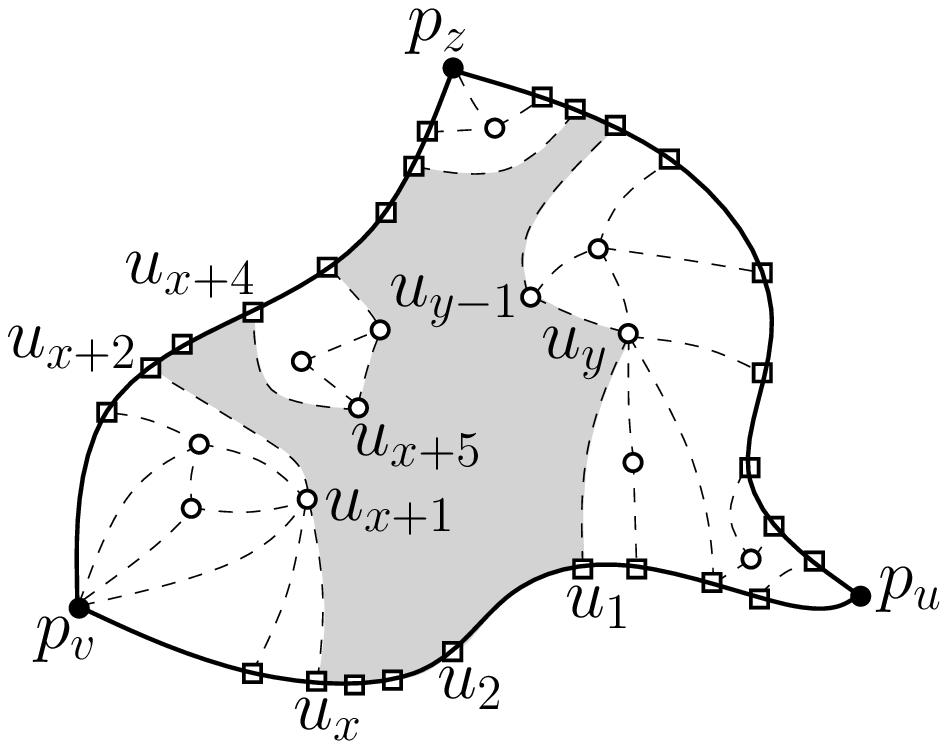}} \hspace{1mm} &
\mbox{\includegraphics[scale=0.41]{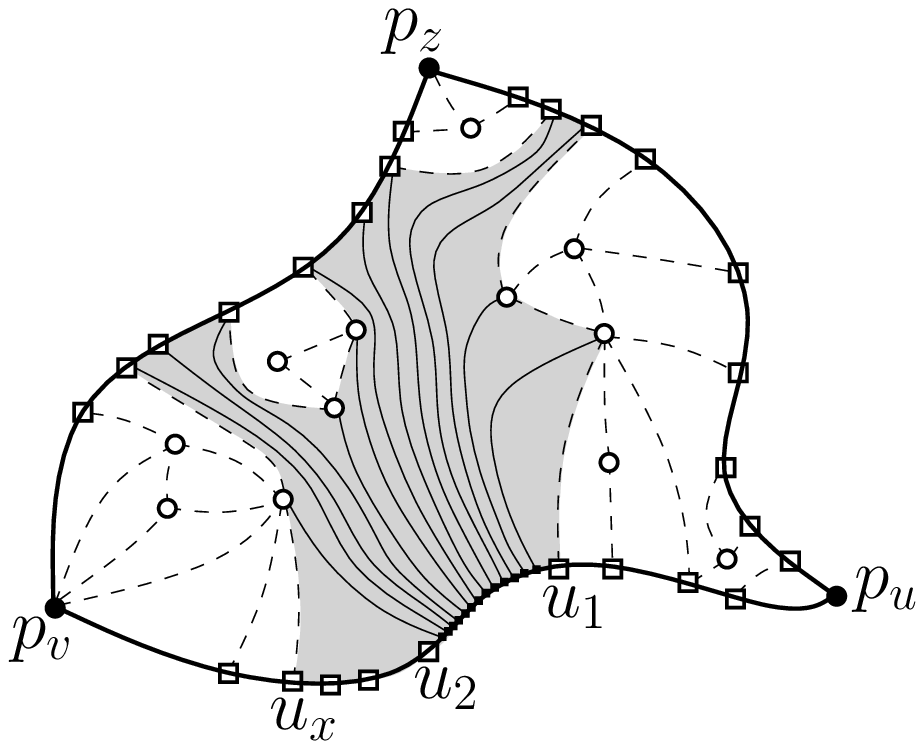}} \\
(a) \hspace{1mm} & (b) \hspace{1mm} & (c) \\
\mbox{\includegraphics[scale=0.41]{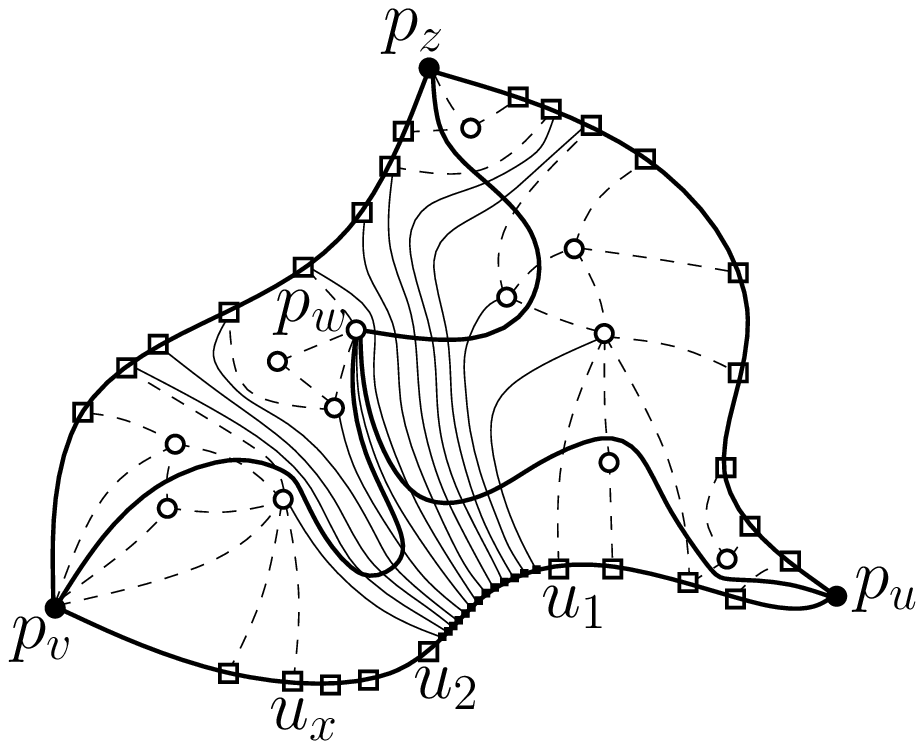}} \hspace{1mm} &
\mbox{\includegraphics[scale=0.41]{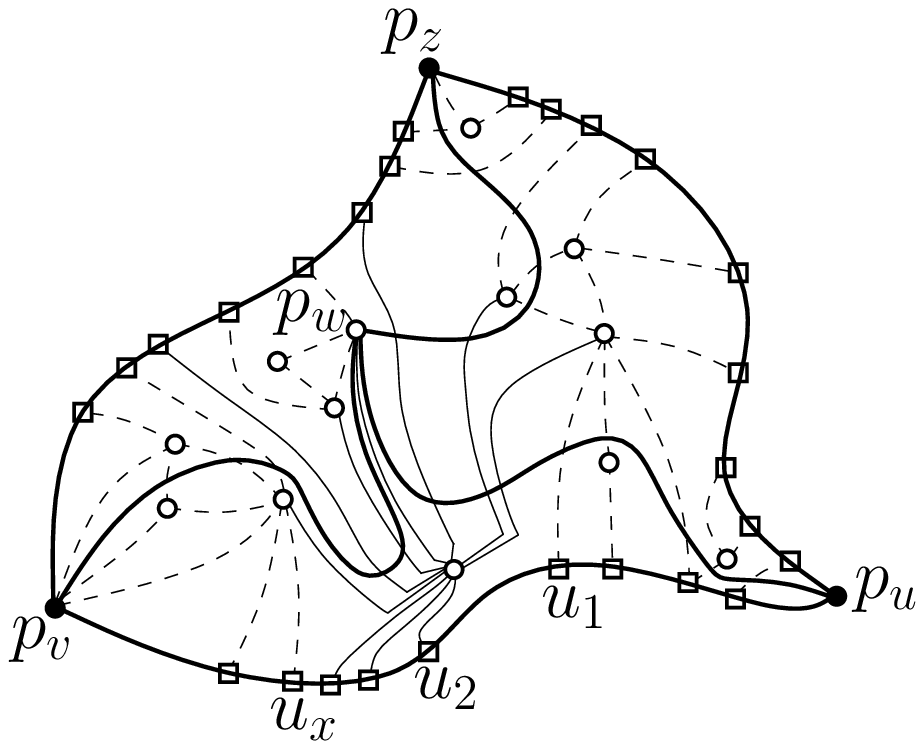}} \hspace{1mm} &
\mbox{\includegraphics[scale=0.41]{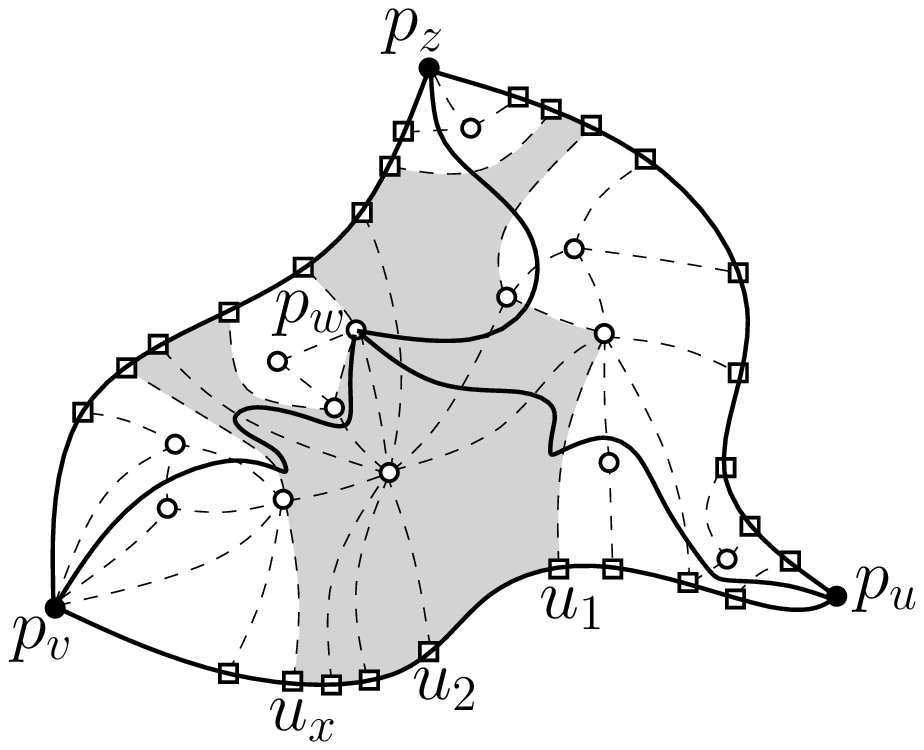}} \\
(d) \hspace{1mm} & (e) \hspace{1mm} & (f)
\end{tabular}
\caption{(a) Graph $G(P\cup R,S)$. Point $p$ is close to $s_{uv}$. In fact, $p$ and a point $r_p$ on $s_{uv}$ are incident to the same face of $G(P\cup R,S)$, and the bounded faces of $G(u,v,p)$, which are shown gray, do not contain any point in $P\cup R$. Point $q$ is not close to $s_{uv}$. In fact, although $q$ and a point $r_q$ on $s_{uv}$ are incident to the same face of $G(P\cup R,S)$, the only bounded face of $G(u,v,q)$, which is shown with gray stripes, contains a point in $P$. (b) Removal of $p$ from $G(P\cup R,S)$. The face $f$ in which $p$ used to lie is colored gray. (c) Insertion of dummy vertices $r'_{y-x},r'_{y-x-1},\dots,r'_1$ (small black squares) on edge $(u_1,u_2)$, and insertion of dummy edges $e_1,e_2,\dots,e_{y-x}$ (solid thin lines) inside $f$. (d) Inductively constructed drawings $s_{uw}$, $s_{vw}$, and $s_{zw}$. (e) Reintroduction of $p$ and its incident edges. (f) Restoration of the position of $p$ and of the drawing of its incident edges, while preserving the topology of the drawing. Only the part of $s_{uw}$, $s_{vw}$, and $s_{zw}$ in the interior of $f$ has to be modified for this sake.}
\label{fig:closetouv}
\end{center}
\end{figure}

A point close to $s_{u,v}$ always exists. Namely, consider any point $p\in P$ incident to the same face of $G(P\cup R,S)$ a point on $s_{uv}$ is incident to. If no point in $P$ lies inside any of the bounded faces of $G(u,v,p)$, then $p$ is the desired point. Otherwise, a set $F$ of isolated vertices lies inside a bounded face $f$ of $G(u,v,p)$. Among those points, there is a point $p'$ that is incident to the same face of $G(P\cup R,S)$ a point on $s_{uv}$ is incident to. Then, consider $G(u,v,p')$. If no point in $P$ lies inside any of the bounded faces of $G(u,v,p')$, then $p'$ is the desired point. Otherwise, a set $F'$ of isolated vertices lies inside a bounded face $f'$ of $G(u,v,p')$. However, $F'$ is a subset of $F$, thus the repetition of such an argument eventually leads to find a point close to $s_{uv}$.

Let $p$ be any point close to $s_{uv}$. Remove $p$ and its incident edges from $G(P\cup R,S)$. Let $f$ be the face of $G(P\cup R,S)$ in which $p$ used to lie and let $C_f$ be the cycle delimiting $f$. Since $p$ is close to $s_{uv}$, the vertices in $R$ lying on $s_{uv}$ appear consecutively along $C_f$. Denote by $u_1,u_2,\dots,u_y$ the clockwise order of the vertices along $C_f$, where $u_1,u_2,\dots,u_x$ are the vertices in $R$ on $s_{uv}$ in order from $p_u$ to $p_v$. See Fig.~\ref{fig:closetouv}(b). It holds $x\geq 2$, given that $p$ is incident to the same face of $G(P\cup R,S)$ a point on $s_{uv}$ is incident to, and given that every point in $R$ has exactly one incident curve in $S$.

Insert $y-x$ dummy vertices $r'_{y-x},r'_{y-x-1},\dots,r'_1$ in $R$ in this
order on edge $(u_1,u_2)$. Insert dummy edges $e_1,e_2,\dots,e_{y-x}$ inside $f$
from $u_{x+i}$ to $r'_i$, for every $1\leq i\leq y-x$. See
Fig.~\ref{fig:closetouv}(c). Inductively draw $s_{uw}$, $s_{vw}$, and $s_{zw}$
so that Properties P1--P5 of the claim are satisfied, where Property P4 ensures
that $\Delta_{uvw}$, $\Delta_{vzw}$, and $\Delta_{zuw}$ contain $n_{uv}-1$,
$n_{vz}$, and $n_{zu}$ points of $P\setminus\{p\}$ in their interior,
respectively. See Fig.~\ref{fig:closetouv}(d).

Introduce $p$ in a point arbitrarily close to edge $(u_1,u_2)$. Reintroduce the edges incident to $p$ as follows. Draw curves connecting $p$ and its neighbors among $u_1,u_2,\dots,u_x$ inside $f$ and arbitrarily close to $s_{uv}$. Also, for each neighbor $u_{x+i}$ of $p$ with $1\leq i\leq y-x$, draw a curve connecting $p$ and $u_{x+i}$ as composed of two curves, the first one arbitrarily close to $s_{uv}$, the second one coinciding with part of the edge $e_i$. Remove dummy vertices $r'_1,r'_2,\dots,r'_{y-x}$ and dummy edges $e_1,e_2,\dots,e_{y-x}$ from the drawing.  See Fig.~\ref{fig:closetouv}(e). Finally, restore the placement of $p$ and the drawing of its incident edges. In order to do so while maintaining the topology of the drawing (i.e. the number and order of the crossings along each edge), curves $s_{uw}$, $s_{vw}$, and $s_{zw}$ have to be modified in the interior of $f$.  See Fig.~\ref{fig:closetouv}(f).

The constructed drawing of $s_{uw}$, $s_{vw}$, and $s_{zw}$ satisfies Properties P1--P5 of the claim as shown in the following.

\begin{itemize}
\item Property P1 directly comes from induction.
\item We prove Property P2. Consider any curve $s\in S$. We prove that, if $s$ connects $p_u$ and $p_w$, then $s_{uw}$ coincides with $s$. Suppose first that $s$ is incident to $p$. By construction, $p\neq p_u,p_w$, hence in this case there is nothing to prove. Suppose next that $s$ is not incident to $p$. By induction, if $s$ connects $p_u$ and $p_w$, then $s_{uw}$ coincides with $s$ before restoring the position of $p$ and the drawing of its incident edges. Moreover, the only part of the drawing of $s_{uw}$ that can be possibly modified in order to restore the position of $p$ and the drawing of its incident edges is the one lying in the interior of $f$. However, if $s_{uw}$ coincides with $s$, then no part of it lies in the interior of $f$, hence $s_{uw}$ still coincides with $s$ after the modification. It can be analogously proved that if $s$ connects $p_v$ and $p_w$, or $p_z$ and $p_w$, then $s_{vw}$ or $s_{zw}$ coincides with $s$, respectively.
\item We prove Property P3. Consider any curve $s\in S$ and assume that $s$ does not coincide with $s_{uw}$. We prove that $s$ and $s_{uw}$ cross at most once. Since restoring the position of $p$ and the drawing of its incident edges does not alter the number of crossings between $s$ and $s_{uw}$, it suffices to prove that they cross at most once when $p$ and its incident edges are first reintroduced inside $f$. Suppose first that $s$ is not incident to $p$. Then, by induction $s$ and $s_{uw}$ cross at most once. Suppose next that $s$ is incident to $p$ and to a point $u_i$, for some $1\leq i\leq y$. If $1\leq i\leq x$, then $s$ is arbitrarily close to $s_{uv}$, hence it does not cross $s_{uw}$ at all. If $x+1\leq i\leq y$, then $s$ is composed of two curves, the first one arbitrarily close to $(u_1,u_2)$ (hence, such a curve does not cross $s_{uw}$ at all), the second one coinciding with part of edge $e_i$ (hence such a curve crosses $s_{uw}$ at most once, by induction). It can be analogously proved that if $s$ does not coincide with $s_{vw}$ or with $s_{zw}$, then $s$ and $s_{vw}$ or $s$ and $s_{zw}$ cross at most once, respectively.
\item In order to prove Property P4, it suffices to observe that by induction $\Delta_{uvw}$, $\Delta_{vzw}$, and $\Delta_{zuw}$ contain in their interior subsets of $P\setminus \{p\}$ with $n_{uv}-1$ points, with $n_{vz}$ points, and with $n_{zu}$ points, respectively, and that by construction $p$ lies in $\Delta_{uvw}$, which hence contains $n_{uv}$ points of $P$.
\item We prove Property P5. Consider any curve $s\in S$. If $s$ is not incident to $p$, then it satisfies the property by induction. Otherwise, $s$ is incident to $p$, hence it has at least one of its end-points inside $\cl(\Delta_{uvw})$. Thus, we only need to show that, if its second end-point is inside $\cl(\Delta_{uvw})$, then $s$ is entirely contained in $\cl(\Delta_{uvw})$. Since $s$ crosses each of $s_{uw}$ and $s_{vw}$ at most once, it follows that $s$ is not entirely contained in $\cl(\Delta_{uvw})$ if and only if it crosses each of $s_{uw}$ and $s_{vw}$ {\em exactly} once. Suppose that $s$ connects $p$ with a point $u_i$. If $1\leq i\leq x$, then $s$ is arbitrarily close to $s_{uv}$, hence it does not cross $s_{uw}$ nor $s_{vw}$ at all. If $x+1\leq i\leq y$, then $s$ is composed of a curve arbitrarily close to $(u_1,u_2)$, which does not cross $s_{uw}$ nor $s_{vw}$ at all, and of a curve coinciding with part of edge $e_i$. If the latter curve crosses both $s_{uw}$ and $s_{vw}$, then $e_i$ would have both its end-points inside $\cl(\Delta_{uvw})$ and still would not entirely lie inside $\cl(\Delta_{uvw})$, which is not possible by induction.
\end{itemize}

This concludes the proof of the claim and of the lemma.
\end{proof}

Fig.~\ref{fig:plane3treesdrawing}(b) shows a planar drawing of $G_2$ satisfying
the properties of Lemma~\ref{le:splittingpoints}. Lemma~\ref{le:splittingpoints}
implies a proof of Theorem~\ref{th:plane3trees-main}. Namely, construct any
planar drawing $\Gamma_1$ of $G_1$. Denote by $P$ the point set to which the
$n-3$ internal vertices of $G_1$ are mapped in $\Gamma_1$. Let $s_{uv}$,
$s_{vz}$, and $s_{zu}$ be the curves representing edges $(u_1,v_1)$,
$(v_1,z_1)$, and $(z_1,u_1)$ in $\Gamma_1$, respectively. Let $S$ be the set of
curves representing the internal edges of $G_1$ in $\Gamma_1$. Let $p_u$,
$p_v$, and $p_z$ be the points on which $u_1$, $v_1$, and $z_1$ are drawn,
respectively. Let $R=\{p_u, p_v, p_z\}$. Construct a planar drawing $\Gamma_2$
of $G_2$ satisfying the properties of Lemma~\ref{le:splittingpoints}. Then,
$\Gamma_1$ and $\Gamma_2$ are planar drawings of $G_1$ and $G_2$, respectively.
By Properties (a) and (c) of Lemma~\ref{le:splittingpoints}, the $n$ vertices of
$G_2$ are mapped to the same $n$ points to which the vertices of $G_1$ are
mapped. Finally, by Properties (b) and (d) of Lemma~\ref{le:splittingpoints}, if
edges $e_1$ of $G_1$ and $e_2$ of $G_2$ have their end-vertices mapped to the
same two points $p_a,p_b\in P \cup\{p_u, p_v, p_z\}$, then $e_1$ and $e_2$ are
represented by the same Jordan curve in $\Gamma_1$ and in $\Gamma_2$; hence,
$\Gamma_1$ and $\Gamma_2$ are a \sefenomap of $G_1$ and $G_2$.

\section{Conclusions} \label{se:conclusions}

In this paper we studied the problem of determining the largest $k_1\leq n$ such that every $n$-vertex planar graph and every $k_1$-vertex planar graph admit a \sefenomap. We proved that $k_1\geq n/2$. No upper bound smaller than $n$ is known. Hence, tightening this bound (and in particular proving whether $k_1=n$ or not) is a natural research direction.

To achieve the above result, we proved that every $n$-vertex plane graph has an $(n/2)$-vertex induced outerplane graph, a result related to a famous conjecture stating that every planar graph contains an induced forest with half of its vertices~\cite{ab-cpg-79}. A suitable triangulation of a set of nested $4$-cycles shows that $n/2$ is a tight bound for our algorithm, up to a constant. However, we have no example of an $n$-vertex plane graph whose largest induced outerplane graph has less than $2n/3$ vertices (a triangulation of a set of nested $3$-cycles shows that the $2n/3$ bound cannot be improved). The following question arises: What are the largest $k_2$ and $k_3$ such that every $n$-vertex plane graph has an induced outerplane graph with $k_2$ vertices and an induced outerplanar graph with $k_3$ vertices? Any bound $k_2>n/2$ would improve our bound for the \sefenomap problem, while any bound $k_3>3n/5$ would improve the best known bound for Conjecture~\ref{conj:induced-forests}, via the results in~\cite{h-iftog-90}.

A different technique to prove that every $n$-vertex planar graph and every $k_4$-vertex planar graph have a \sefenomap is to ensure that a mapping between their vertex sets exists that generates no shared edge. Thus, we ask: What is the largest $k_4\leq n$ such that an injective mapping exists from the vertex set of any $k_4$-vertex planar graph and the vertex set of any $n$-vertex planar graph generating no shared edge? It is easy to see that $k_4\geq n/4$ (a consequence of the four color theorem~\cite{ah-epfci-77,ahk-epfcii-77}) and that $k_4\leq n-5$ (an $n$-vertex planar graph with minimum degree $5$ does not admit such a mapping with an $(n-4)$-vertex planar graph having a vertex of degree $n-5$).

Finally, it would be interesting to study the geometric version of our problem. That is: What is the largest $k_5\leq n$ such that every $n$-vertex planar graph and every $k_5$-vertex planar graph admit a geometric simultaneous embedding with no mapping? Surprisingly, we are not aware of any super-constant lower bound for the value of $k_5$.

\bibliography{bibliography}
\bibliographystyle{plain}

\end{document}